\documentclass[12pt]{article}

\usepackage{amssymb}
\usepackage{times}

\newtheorem{satz}{Theorem}[section]
\newtheorem{definition}[satz]{Definition}
\newtheorem{lemma}[satz]{Lemma}
\newtheorem{koro}[satz]{Corollary}

\newenvironment{proof}{\par\noindent {\it Proof:} \hspace{7pt}}%
{\hfill\hbox{\vrule width 7pt depth 0pt height 7pt}
\par\vspace{10pt}}

\newcommand{\bC}{{\mathbb C}}
\newcommand{\bD}{{\mathbb D}}
\newcommand{\bJ}{{\mathbb J}}
\newcommand{\bM}{{\mathbb M}}
\newcommand{\bN}{{\mathbb N}}
\newcommand{\bR}{{\mathbb R}}
\newcommand{\bT}{{\mathbb T}}
\newcommand{\bX}{{\mathbb X}}
\newcommand{\bZ}{{\mathbb Z}}

\newcommand{\cB}{{\cal B}}

\newcommand{\cG}{{\cal G}}
\newcommand{\cH}{{\cal H}}
\newcommand{\cJ}{{\cal J}}
\newcommand{\cI}{{\cal I}}
\newcommand{\cP}{{\cal P}}

\newcommand{\veps}{\varepsilon}

\newcommand{\GC}{Gram constant}
\newcommand{\DB}{determinant bound}

\newcommand{\dB}{\delta}
\newcommand{\BW}{{\bigwedge }}
\newcommand{\norm}[1]{{\left\Vert #1 \right\Vert}}
\newcommand{\abs}[1]{{\left\vert #1 \right\vert}}
\newcommand{\tnorm}[2]{\vert\!\vert\!\vert \, #1 \, \vert\!\vert\!\vert_{#2}}
\newcommand{\Sp}[1]{{\rm #1}}

\newcommand{\E}{{\rm e}}
\newcommand{\I}{{\rm i}}
\newcommand{\dd}{{\rm d}}

\newcommand{\CC}{\bC}
\newcommand{\DD}{\bD}
\newcommand{\hC}{\hat C}
\newcommand{\hD}{\hat D}
\newcommand{\Seq}{\bJ}
\newcommand{\mumom}{\mu}
\newcommand{\mumoma}{\mu^{a}}
\newcommand{\Xspace}{X}
\newcommand{\ball}{B_1^{(n)}}

\newcommand{\Comhg}{C_\Omega^{(\scalefunc,>)}}
\newcommand{\Comhl}{C_\Omega^{(\scalefunc,<)}}
\newcommand{\Aomhg}{A_\Omega^{(\scalefunc,>)}}
\newcommand{\chil}{\chi_<}
\newcommand{\chig}{\chi_>}
\newcommand{\Emin}{E_{\rm min}}
\newcommand{\const}{\;{\rm const}\;}
\newcommand{\abf}{\alpha}
\newcommand{\sfrac}[2]{{\textstyle{\frac{#1}{#2}}}}

\newcommand{\EKLF}{$k$--form} 

\newcommand{\Matf}{\bM_F}

\newcommand{\MI}{{\bf b}}
\newcommand{\ami}{b}
\newcommand{\scalefunc}{{\bf h}}
\newcommand{\palefunc}{{\bf g}}
\newcommand{\scf}{f}
\newcommand{\scg}{g}

\newcommand{\tK}{K}
\newcommand{\Kd}{K_d}
\newcommand{\tKd}{\tilde K_d}
\newcommand{\Kchil}{\kappa}
\newcommand{\horror}{K'}

\newcommand{\barm}{n} 
\newcommand{\ps}{\psi}
\newcommand{\psq}{\bar\psi}
\newcommand{\ul}[1]{\underline{#1}}

\hfuzz=5pt

\begin{document}

\title{Determinant Bounds and the Matsubara 
UV Problem of Many-Fermion Systems}

\author{Walter A. de S. Pedra%
\footnote{present address: Institut f\" ur Mathematik, Universit\" at Mainz}, 
Manfred Salmhofer\\
\small
Theoretical Physics, University of Leipzig, Postfach 100920, 04009 Leipzig, 
and
\\
\small
Max--Planck Institute for Mathematics in the Sciences
\\
\small
Inselstr.\  22, 04103 Leipzig, Germany
}
\date{July 4, 2007}
 \maketitle

\begin{center}
{\sl Dedicated to J\" urg Fr\" ohlich in celebration of his 61$^{\rm st}$ birthday}
\end{center}

\begin{abstract}
\noindent
It is known that perturbation theory converges in fermionic field theory 
at weak coupling
if the interaction and the covariance are summable and 
if certain determinants arising in the expansion can be bounded
efficiently, e.g.\ 
if the covariance admits a Gram representation with a finite \GC.
The covariances of the standard many--fermion systems 
do not fall into this class due to the slow decay of the covariance
at large Matsubara frequency,  giving rise to a UV problem in 
the integration over degrees of freedom with Matsubara frequencies
larger than some $\Omega$ (usually the first step in a multiscale analysis).
We show that these covariances do not  have Gram representations
on any separable Hilbert space. 
We then prove a general bound for determinants 
associated  to chronological products
which is stronger than the usual Gram bound
and which applies to the many--fermion case. 
This allows us to prove convergence of the first integration step 
in a rather easy way, for a short--range interaction
which can be arbitrarily strong, provided $\Omega$ is chosen large enough.
Moreover,  we give -- for the first time -- nonperturbative bounds on all scales
for the case of scale decompositions of the propagator which do not impose 
cutoffs on the Matsubara frequency. 
\end{abstract}

\section{Gram representations and determinant bounds}

\noindent
Let $\bX$ be a set and  $M: \bX^2 \to \bC$, $(x,y) \mapsto M(x,y)$.
We call $M$  an $(\bX \times \bX)$-matrix and use the notation 
$M = (M_{xy})_{x,y\in\bX}$ (if $\bX = \{ 1, \ldots, n\}$, we call it as 
usual an $(n \times n)$--matrix).

\begin{definition}
\label{Gram}
Let $M$ be an $(\bX \times \bX)$-matrix.
A triple $(\cH,v,w)$, where $\cH$ is a Hilbert space and $v$ and $w$
are maps from $\bX $ to $\cH$, is called a \emph{Gram representation} 
of $M$ if 
\begin{equation}
\forall \, x,x' \in \bX: \quad M_{xx'} = \langle v_x, w_{x'}\rangle
\end{equation}
and if there is a finite constant $\gamma_M > 0$ such that
\begin{equation}
\sup\limits_{x \in \bX} \;
\max\{\norm{v_x}, \norm{w_x} \}
\le
\gamma_M .
\end{equation}
$\gamma_M$ is called the Gram constant of $M$ associated to 
the Gram representation $(\cH,v,w)$.
\end{definition}
If $M$ has a Gram representation, then the Gram estimate
(see, e.g., Lemma B.30 of \cite{msbook}) implies that 
for all $n \in \bN$ and all $x_1 , \ldots, x_n, y_1, \ldots, y_n \in \bX$, 
\begin{equation}
\abs{
\det \left[
(M_{x_ky_l})_{k,l}
\right]}
\le 
\prod\limits_{k=1}^n 
\norm{v_{x_k}} \; \norm{w_{y_k}}
\le 
{\gamma_M}^{2n} .
\end{equation}
Every $(n\times n)$--matrix $A$ has a Gram representation -- 
the equation$ A = 1 \cdot A$ (where $1$ denotes the unit matrix)
means that $A_{kl} = \langle e_k , a_l\rangle_{\bC^n}$
where $e_k$ is the $k^{\rm th}$ row of $1$ and $a_l$ is the 
$l^{\rm th}$ column of $A$. The associated Gram estimate
$|\det A | \le \prod_l \norm{a_l}_2$, the Hadamard bound, 
has associated Gram constant $\gamma_{\rm Had} = \max_{l} \norm{a_l}_2$.
Although considering diagonal matrices shows that the Hadamard
bound is optimal, the way it was derived here is basis--dependent, 
and its application in an arbitrary basis can lead to a significant 
overestimate of the determinant.
For instance, the matrix $P=v \otimes v$, where $v=(1,\ldots,1)^T \in \bC^n$
has $P_{kl} =1$ for all $k,l$, so the above Gram representation gives
$\gamma_{\rm Had} = \sqrt{n}$, thus the bound $|\det P | \le n^{n/2}$.
On the other hand, $P$ has the Gram representation $P_{ij} = w \cdot w^T$ 
with $w = n^{-1/2} (1,\ldots, 1)$, which gives the bound $\det P \le 1$. 

Thus the main issue about Gram bounds for a given class of matrices 
is not their existence of some bound, but its size, and its dependence on $n$.
Specifically, what is really needed in the  proof of 
convergence of fermionic perturbation theory given in \cite{Salm99} 
are bounds of the following type:  there is a finite constant $\delta$ such that
for all $n \in \bN$ and all
$x_1, \dots, x_n, y_1, \dots, y_n \in \bX $ 
\begin{equation}\label{eq:detbou}
 \sup\limits_{P \in \cP_{n,1}} 
\abs{\det (C_{x_i y_j} \, P_{ij})_{i,j}}
\le {\delta}^{2n}.
\end{equation}
Here $\cP_{n,1}$ denotes the set of 
complex hermitian $(n\times n)$--matrices $P=(P_{ij})$
that are nonnegative, i.e. $\sum_{i,j=1}^n P_{ij} \,
\bar{c_i} c_j \geq 0$ for all $c_1, \dots, c_n \in \bC$, 
and that have diagonal elements $P_{ii} \le 1$.
Such matrices $P$ arise naturally in interpolation constructions
of the tree expansion for the connected functions; they are positive 
if the tree expansion is chosen well
\cite{AbdessRiv98,Salm99}.

We briefly recall Lemmas 7 and 8 of  \cite{Salm99}:
The positivity of $P$ implies that $P = Q^2=Q^*Q$ with $Q \ge 0$, 
i.e.\ 
\begin{equation}\label{eq:willi}
P_{ij} = \langle q_i , q_j \rangle
\end{equation}
where $q_i$ is the $i^{\rm th}$ column of $Q$. 
Because $\langle q_i , q_i \rangle = P_{ii} \le 1$ the Gram constant of 
$P$ is bounded by $1$. 
If $C$ has a Gram representation $(\cH,v,w)$, then 
the matrix with elements 
$M_{ij} = C_{x_i y_j} \, P_{ij}$ has a Gram representation 
\begin{equation}\label{eq:willie}
M_{ij} 
=
\langle v_{x_i} \otimes q_i \; , \; w_{y_j} \otimes q_j \rangle .
\end{equation}
and $M$ has the same Gram constant as $C$ because $\norm{q_i} \le 1$ for all $i$.

\begin{definition}
\label{det.b}
Let $C$ be an $(\bX \times \bX)$-matrix.
A finite constant $\dB_C > 0$ 
is called a \emph{\DB} of $C$  
if for all $n \in \bN$  and all
$x_1, \dots, x_n, y_1, \dots, y_n \in \bX $ 
\begin{equation}\label{eq:detbound}
 \sup\limits_{p_1 \ldots, p_n, q_1, \ldots, q_n \in \ball}
\abs{\det (\langle p_i \, , \, q_j \rangle C_{x_i y_j})_{i,j}}
\le {\dB_C}^{2n}.
\end{equation}
Here $\ball = \{ \xi \in \bC^n : \norm{\xi}_2 \le 1\}$
denotes the closed $n$--dimensional unit ball. 
\end{definition}

We have replaced the supremum over $P \in \cP_{n,1}$ 
by that over a larger set in Definition \ref{det.b}
because this makes the definition robust under 
the operation of taking arbitrary 
submatrices (positivity is spoiled by that operation). 

If $C$ has a Gram representation with \GC\ $\gamma_C$, 
then $C$ also has a \DB\ $\dB_C = \gamma_C$
by the same argument as above, i.e.\ writing 
$\langle p_i \, , \, q_j \rangle C_{x_i y_j} = 
\langle p_i \otimes v_{x_i} , q_j \otimes w_{y_j} \rangle$.
However, the Gram representation is not necessary for 
a useful  \DB, and in this paper, we prove optimal \DB s
for a class of covariance matrices for which no 
Gram representation with a good Gram constant is known.
As will be discussed in Section \ref{sec:fermUV}, these matrices
arise naturally in time--ordered perturbation theory and
standard functional integral representations of 
many--fermion systems.
The constructions we give are
motivated by similar ones in \cite{FKT04}, 
and we shall discuss this relation in more detail in Section \ref{sec:fermUV}.

\begin{satz}
\label{haupt} 
Let $K, k \in \bN_0$, $k+K \ge1$, 
and $C_0, \ldots, C_{k+K}$ be $(\bX \times \bX)$--matrices. 
Assume that for all $l \in \{ 0, \ldots, k+K\}$, 
$C_l $ has a Gram representation with \GC\ $\gamma_l$. 
Let $({\cal J }, \succ)$ be a totally ordered set, 
and for all $l \in \{ 1, \ldots, k+K\}$ let
$\varphi_l$ and $\varphi'_l$ be functions from 
$\bX$ to $\cJ$. 
Denote $1_A =1 $ if $A$ is true and $1_A = 0 $ otherwise. 
Then the $(\bX \times \bX)$--matrix $M$ given by
\begin{equation}\label{Mhaupt}
M_{xy} 
= 
(C_0)_{xy}
+
\sum\limits_{l=1}^k (C_l)_{xy}
1_{\varphi'_l(x) \succ  \varphi_l(y)}
+
\sum\limits_{l=k+1}^{k+K} (C_l)_{xy}
1_{\varphi'_l(x) \succeq  \varphi_l(y)}
\end{equation}
has \DB\ $\dB_M = \sum\limits_{l=0}^{k+K} \gamma_l$. 
\end{satz}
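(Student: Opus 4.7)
My plan is to bound the determinant of $(\langle p_i,q_j\rangle M_{x_iy_j})_{i,j}$ by combining a multilinear expansion with a Gram-type inequality in a suitably augmented Hilbert space that encodes the chronological data.

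First, I would use multilinearity of the determinant in the columns to expand
\[
\det\bigl(\langle p_i,q_j\rangle M_{x_iy_j}\bigr)_{i,j}
= \sum_{\ul{l}\in\{0,1,\dots,k+K\}^n}\det\bigl(N^{(\ul{l})}\bigr),
\]
where $N^{(\ul{l})}_{ij}=\langle p_i,q_j\rangle(C_{l_j})_{x_iy_j}\chi_{l_j}(i,j)$, with $\chi_0\equiv1$ and, for $l\ge1$, $\chi_l$ the appropriate chronological indicator (strict for $l\le k$, non-strict for $l>k$). A per-tuple bound of the form $|\det N^{(\ul{l})}|\le(\sum_l\gamma_l^2)^{n/2}\prod_j\gamma_{l_j}$ would suffice, since summing over $\ul{l}\in\{0,\dots,k+K\}^n$ then yields $(\sum_l\gamma_l^2)^{n/2}(\sum_l\gamma_l)^n\le(\sum_l\gamma_l)^{2n}$, using $\sum_l\gamma_l^2\le(\sum_l\gamma_l)^2$.

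Second, to obtain the per-tuple bound, I would realise $N^{(\ul{l})}$ as the matrix of inner products of vectors in the augmented Hilbert space
\[
\cH^\star=\bC^n\otimes\Bigl(\cH_0\oplus\bigoplus_{l=1}^{k+K}\cH_l\otimes\mathcal{K}_l\Bigr),
\]
where each $\mathcal{K}_l$ is a finite-dimensional auxiliary space --- say the span of the values of $\varphi_l,\varphi'_l$ that actually appear --- chosen to encode the chronological condition. The row vector $\tilde a_i$ carries $p_i$, the Gram vectors $v^l_{x_i}$ in all summands, and auxiliary ``past'' vectors $\xi^l_i\in\mathcal{K}_l$ depending on $\varphi'_l(x_i)$; the column vector $\tilde b^{(\ul{l})}_j$ retains only the $l_j$-th summand and carries $q_j$, $w^{l_j}_{y_j}$, and an auxiliary ``delta'' $\eta^{l_j}_j\in\mathcal{K}_{l_j}$ at $\varphi_{l_j}(y_j)$, arranged so that $\langle\xi^l_i,\eta^l_j\rangle_{\mathcal{K}_l}=\chi_l(i,j)$. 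Given $\|\xi^l_i\|,\|\eta^l_j\|\le1$, the Gram inequality cited in the introduction would then give $\|\tilde a_i\|^2\le\sum_l\gamma_l^2$ and $\|\tilde b^{(\ul{l})}_j\|^2\le\gamma_{l_j}^2$, hence the required per-tuple bound.

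The hard part is precisely the norm control of $\xi^l_i,\eta^l_j$: the discussion preceding the theorem makes clear that no Gram representation of the chronological indicator with uniformly bounded vectors exists on any separable Hilbert space, so the naive choice $\xi^l_\tau=1_{\{t\prec\tau\}}\in\bC^{\cJ_l}$ and $\eta^l_s=\delta_s$ gives a norm $\|\xi^l_i\|$ that grows with the size of the support $\cJ_l\subset\cJ$ of values appearing. Overcoming this requires exploiting the total order on $\cJ$ together with the Gram structure of the $C_l$'s --- most likely through a Cauchy--Binet expansion of $\det N^{(\ul{l})}$ in an orthonormal basis for $\mathcal{K}_l$ consisting of indicators of consecutive ``slabs'' of $\cJ_l$, so that the antisymmetry of the determinant absorbs the nominally large step-function norms through cancellations that mirror the totally unimodular character of the $0$-$1$ matrix $(1_{\varphi'_l(x_i)\succ\varphi_l(y_j)})_{i,j}$. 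The distinction between the $k$ strict and the $K$ non-strict chronological indicators should then amount to a minor convention choice (open versus closed ``past'') once the main estimate is in place.
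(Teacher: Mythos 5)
Your reduction to a per-tuple bound is structurally sensible, but the per-tuple bound itself --- which is the entire content of the theorem --- is exactly the step you have not proved, and the route you sketch for it cannot work. You need vectors $\xi^l_i,\eta^l_j$ of norm at most $1$ with $\langle \xi^l_i,\eta^l_j\rangle = 1_{\varphi'_l(x_i)\succ\varphi_l(y_j)}$; restricted to the at most $n$ distinct values of $\cJ$ that occur, this is a Gram factorization of an $n\times n$ triangular $0$--$1$ matrix with both factors uniformly bounded. No such factorization exists: the factorization ($\gamma_2$) norm of the triangular all-ones matrix grows like $\log n$ (this is the finite-$n$ shadow of Lemma \ref{non.sep.1}), so $\max_i\norm{\xi^l_i}\,\max_j\norm{\eta^l_j}$ must grow with $n$ no matter how the auxiliary space ${\cal K}_l$ is chosen. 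Your closing appeal to ``cancellations'' via Cauchy--Binet and total unimodularity is precisely where the proof would have to happen, and it is not an argument; in particular, any estimate that first realises $N^{(\ul{l})}$ as a Gram matrix and then bounds $\abs{\det N^{(\ul{l})}}$ by a product of row and column norms has already discarded the cancellations you would need.

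The paper's mechanism is genuinely different and never represents the indicator as an inner product. Theorem \ref{det.rep} shows that for a single summand the determinant $\det\bigl(\alpha_i(v_j)\,1_{j'_i\succ j_l}\bigr)$ equals, up to sign, the $(J,J')$--chronological product of the $2n$ operators $v_1\lrcorner,\dots,v_n\lrcorner,(\alpha_1\wedge),\dots,(\alpha_n\wedge)$ applied to $1\in\BW^0V^*$: the indicator determines the \emph{order} of the operators rather than contributing extra Gram vectors, and the bound follows from the operator-norm estimates $\norm{v\lrcorner},\norm{(\alpha\wedge)}\le\norm{v}$ of Lemma \ref{le41} (after the relabelling of coincident times in Lemma \ref{det.bound}, which is also where strict versus non-strict inequalities are handled). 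The sum over $l$ is treated not by column-multilinearity --- which would attach different times $\varphi'_{l_j}(x_i)$ to a single row and destroy the chronological structure --- but by the generalized Laplace expansion of Lemma \ref{subdetsum}, which keeps each block inside a single $C_l$ and uses ${p \choose s}^2\le{2p \choose 2s}$ to recover $(\sum_l\gamma_l)^{2p}$; the need to control all submatrices there is why Definition \ref{det.b} takes the supremum over the larger set.
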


\noindent
Theorem \ref{haupt} is proven in Section \ref{sec:det}.

\medskip\noindent
The bound given in Theorem \ref{haupt} is optimal in the following sense.
Let us assume that for each $l$, the Gram representation for the 
$C_l$ is optimal in that the Gram constant $\gamma_{C_l}$ 
satisfies $\gamma_{C_l}^2 = \sup_{x,y \in \bX} \abs{(C_l)_{xy}}$,
and that the decomposition (\ref{Mhaupt}) is nonredundant 
in the sense that for any choice of $x$ and $y$, only one of the summands is nonzero (in particular, $C_0 =0$).
Then the determinant bound given in Theorem \ref{haupt} 
is optimal up to a factor $k+K$
because 
\begin{equation}
{\dB_M} \le (k+K) \left( \sup\limits_{x,y \in \bX} |M_{xy}| \right)^{\frac12}
\end{equation}
and because, by Definition \ref{det.b}, the \DB\ $\dB_M$ of a matrix $M$ satisfies
\begin{equation}\label{opt}
{\dB_{M}}^{2n} \geq \sup\limits_{x_1 , \dots, x_n \in
\bX \atop y_1 , \dots, y_n \in \bX } \abs{\det (M_{x_i y_j}\delta_{ij})}
= (\sup\limits_{x,y \in \bX} |M_{xy}|)^n.
\end{equation}

\section{The Matsubara UV problem for fermion systems}\label{sec:fermUV}

In this section, we specify the covariances for the many--fermion models,
and then briefly review the well--known problem with the standard
Gram representation due to the slow decay at large frequencies
which is caused by the indicator functions from time ordering, 
which are special cases of  the ones appearing in (\ref{Mhaupt})
(the Matsubara UV problem). 
We show that, if a Gram representation of these
covariances exists, it has rather unusual properties. 
Then we state our main results for these models
which follow directly from our new determinant bounds. 
A detailed analysis of these models will be given in \cite{PeSa}.

We consider the standard many--fermion model, as formulated
for instance in \cite{BratRob} or in \cite{msbook}, Chapter 4. 
The Hamiltonian of this model is of the form $H=H_0 + V$. 
The free part $H_0$ is given by a hopping term (if a lattice model is considered)
or a differential operator (if a continuum model is considered). 
In either case, the relevant data for the present discussion are 
a momentum space $\cB$ dual to configuration space $\Xspace$
and an energy function $E: \cB \to \bR$, 
$\Sp{p} \mapsto E(\Sp{p})$, which assigns an energy $E(\Sp{p})$
to a particle with (quasi)momentum $\Sp{p} \in \cB$.
The interaction part $V$ of $H$ describes the interaction of two or more
particles (see below). 

To be specific, we review briefly how $E$ arises in some relevant cases. 
For a continuum system in $d$ spatial dimensions
without a crystal potential,  $\Xspace = \bR^d$, $\cB = \bR^d$, and 
$E(\Sp{p}) = \Sp{p}^2 - \mu$, where  the parameter $\mu > 0$, the chemical potential, 
is a Lagrange parameter used to adjust the particle density.
Particles in a crystal are modelled by a periodic 
Schr\" odinger operator containing a potential that is $\Gamma$--periodic, 
where $\Gamma \subset \bR^d$ is a lattice of maximal rank. 
In this case,  $\cB$ is the torus $\cB = \bR^d / \Gamma^\#$, 
where $\Gamma^\#$ is the dual lattice to $\Gamma$.
The operator has a band spectrum $p \mapsto (e_\nu (\Sp{p}))_{\nu \in \bN}$, 
with the  index $\nu$ labelling the bands. The case of a single $E$ is obtained by 
restricting to a single band $\nu = \nu_0$
and setting  $E(\Sp{p}) = e_{\nu_0} (\Sp{p}) - \mu$. 
For a (one--band) lattice model on a spatial lattice $\Lambda$, 
$\cB = \bR^d / \Lambda^\#$  is again a torus, 
and $E(\Sp{p})$ is the Fourier transform of 
the hopping matrix (see \cite{msbook}, Chapter 4). 
The motivation for restricting to a single band is that
the interesting case is the one where $E(\Sp{p})$ has a 
nontrivial zero set, and that in many interesting cases, the bands do not
overlap, so that for this zero set, only a single band matters. 

In field theoretic constructions, one often considers 
configuration spaces $\Xspace = \Gamma / L \Gamma$ 
that have sidelengths $L \in \bN$, hence finite volume, 
in which case momentum space is discrete: 
$\cB = \cB_L = L^{-1} \Gamma^\# / \Gamma^\#$. 
We shall consider the cases of finite and infinite volume 
in parallel and use the conventions of \cite{msbook}, Appendix A, 
for the Fourier transform.  We denote by $\mumom (\dd \Sp{p}) $ 
the natural invariant Haar measure on the torus $\cB$; 
specifically, for the continuous torus corresponding to infinite volume 
it is given by $(2\pi)^{-d}$ times Lebesgue measure, for the 
discrete torus $\cB_L$ corresponding to a finite volume
it is given by the inverse of the volume times the counting measure. 
We shall drop the subscript $L$ on $\cB_L$ when no confusion can arise.

The interaction part of the Hamiltonian is assumed to be given 
by a two--body potential $v$, where $v(x-y)$ is the interaction 
energy of a configuration with one particle at $x$ and one particle at $y$.
Most of the present paper is concerned with properties of the covariance, 
in which the interaction plays no role. However, the decay properties
of the interaction are important for convergence of expansions, 
see below. 
The correct treatment of the interaction is difficult, but some progress 
has been made by multiscale expansion methods. One of the purposes
of the present paper is to simplify and extend parts of this analysis, namely 
the ultraviolet (UV) integration, which is quite different from the analysis of
the infrared singularity which arises in the limit of zero temperature. 

We briefly discuss the UV problems arising in such models. 
There is a spatial  UV problem associated to continuum interactions that have
a singularity at coinciding points, such as, for instance, 
a Yukawa potential $\E^{-\alpha |x|}/|x|$, but this is not the issue
we address here.

There are also different UV problems associated to the covariances. 
The first one is related to the noncompactness of momentum space
in the first example mentioned above. A similar problem arises
for the periodic Schr\" odinger operator, namely there is an infinite 
number of bands. For the lattice system, the lattice spacing 
provides a natural spatial ultraviolet cutoff.  The UV problem we
are concerned with here is the discontinuity of the covariance
as a function of the time variable, and the corresponding slow
decay of its Fourier transform in the dual variable, the 
{\em Matsubara frequency}. 
In the continuum case $X=\bR^d$, 
we shall therefore impose a cutoff  on the spatial part of momentum. 
We do this by using the measure $\mumoma$ where, for $a > 0$, 
$\mumoma (\dd \Sp{p})  = \chi (a \Sp{p}) \mumom(\dd \Sp{p})$,
with $\chi$ a nonnegative function on $\bR^d$ of compact support
chosen such that $\int \chi(\Sp{p}) \mumom(\dd \Sp{p}) = 1$,
hence $\mumoma ( \cB ) = a^{-d}$. The UV cutoff parameter $a$ scales similarly
to a lattice spacing: if $\Xspace = a\bZ^d$, $\mumom (\cB)  = a^{-d}$. 
For a general lattice $\Gamma$, which may have different spacings
in the different directions, we define $a$ by $\mumom (\cB ) = a^{-d}$, 
so that $a$ is a geometric mean of the lattice spacings,
and set $\mumoma = \mumom$. 

Let $\beta > 0$, $f_\beta (E) = (1 + \E^{\beta E})^{-1}$, and
for $(\tau,E) \in (-\beta,\beta] \times \bR$ let
\begin{equation}\label{eq:CtaE}
\CC(\tau,E)
=
\cases{
- \E^{ -\tau E} \; (1- f_\beta (E)) & for $0 < \tau \le \beta$ \cr
\E^{-\tau E} f_\beta(E) & for $- \beta  <  \tau \le 0$.}
\end{equation}
Extend the function $\CC$ to a function on $\bR \times \bR$
that is $2\beta$ -- periodic in $\tau$. Note that 
\begin{equation}
\CC (\tau + \beta, E ) 
=
- \CC (\tau, E) .
\end{equation}
In the application, the parameter $\beta $ is the inverse temperature,
and the Fermi function $f_\beta $ is the expected occupation number 
for free fermions. 

\begin{definition}\label{Covdef}
The free covariance (free one--particle Green function) for a many--fermion 
system is the inverse Fourier transform of  the map
$\Sp{p} \mapsto C(\tau, E(\Sp{p}))$ :
\begin{equation}\label{eq:Cttpxxp}
C_{(\tau,\Sp{x}),(\tau',\Sp{x}')}
=
\int_\cB
\mumoma (\dd \Sp{p}) \;
\E^{\I \Sp{p} \cdot (\Sp{x} - \Sp{x}')}
\; 
\CC(\tau-\tau', E(\Sp{p})) .
\end{equation}
More generally, let $\scalefunc \in L^1(\cB, \mumoma)$ and define
\begin{equation}\label{eq:Cscttpxxp}
C_{(\tau,\Sp{x}),(\tau',\Sp{x}')}^{(\scalefunc)}
=
\int_\cB
\mumoma (\dd \Sp{p}) \, \scalefunc (p) \;
\E^{\I \Sp{p} \cdot (\Sp{x} - \Sp{x}')}
\; 
\CC(\tau-\tau', E(\Sp{p})) .
\end{equation}
\end{definition}

\noindent
The function (\ref{eq:Cttpxxp}) arises in time--ordered expansions 
relative to a quasifree state corresponding to a 
quadratic Hamiltonian $H_0$ with dispersion relation $E$, as discussed
above. If  we denote the fermionic field 
operators in a second--quantized formulation by 
$a^{\hphantom{*}}_{\Sp{x}}$  
and set $a^{(+)}_{\tau,\Sp{x}} = \E^{\tau H_0} a^*_{\Sp{x}} \E^{-\tau H_0}$
and 
$a^{(-)}_{\tau,\Sp{x}} = \E^{\tau H_0} a^{\hphantom{*}}_{\Sp{x}} \E^{-\tau H_0}$,
\begin{equation}\label{15}
C_{(\tau,\Sp{x}),(\tau',\Sp{x}')}
=
-    
\omega_0
\left(
\bT
[ 
a^{(-)}_{\tau,\Sp{x}} 
a^{(+)}_{\tau',\Sp{x}'}
]
\right)
\end{equation} 
where $\omega_0$ denotes the quasifree state corresponding to $H_0$
around which we expand, 
and $\bT$ denotes time ordering  \cite{AGD}.
As $\omega_0$ is a KMS state, (\ref{15}) makes sense for all $\tau, \tau' \in \bR$ with $0 \le \abs{\tau-\tau'} \le \beta$. 
Because the field operators obey the canonical anticommutation relations, 
the time ordering, which avoids commutator terms (keeping only the fermionic
antisymmetry), leads to discontinuities in the function, which are explicit in 
(\ref{eq:CtaE}). Thus the discontinuity of $C$ reflects the microscopic structure of the 
physical system, as encoded in the anticommutation relations of the field operators
that generate the observable algebra.  

In the above definitions, we have assumed for simplicity  that
$C_{xy}$ and $V(x,y)$ depend only on space coordinates $x,y \in
\Xspace$, with $\Xspace$ as above. 
It is straightforward to generalize our arguments to the case
with spin or additional indices on which the fields depend
(e.g.\ for the usual models with $SU(N)$ symmetry, this just amounts to
replacing $C$ by $C \otimes 1_N$, where $1_N$ denotes the
$N$--dimensional unit matrix, and the representations by inner
products used below can be adapted in the obvious way by tensoring 
with a factor $\bC^N$ and using that $\delta_{i,j} = \langle e_i, e_j \rangle$
for any orthonormal basis of $\bC^N$).

Obviously, (\ref{eq:Cttpxxp}) can be regarded as defining an $(\bX_d \times \bX_d)$--matrix,
where
\begin{equation}
\bX_d = [0,\beta) \times \Xspace
\end{equation}
Let 
\begin{equation}
\hat{\bX}_d = \Matf \times \cB 
\end{equation}
where $\Matf =  \frac{\pi}{\beta} ( 2 \bZ +1)$.
The Fourier transform of $C$ is 
\begin{equation}
\label{prop.four}
 \hC(\omega, \Sp{p}) = \frac{1}{\I\, \omega -
E(\Sp{p})}, \quad (\omega, \Sp{p}) \in \hat{\bX}_d
\end{equation}
The standard way to obtain a Gram representation for (regularized)
covariances in quantum field theory is via their Fourier representation. 
In our present setting, if $\hD \in L^1(\hat{\bX}_d)$, 
then a Gram representation for $D$ is obtained simply by setting
$\cH = L^2 (\hat{\bX}_d)$, and 
for $(\tau,\Sp{x}) \in \bX_d$, 
\begin{eqnarray}\label{eq:stanGram}
v_{\tau,\Sp{x}} (\omega, \Sp{p}) 
&=& 
\E^{-\I \tau \omega + \I \Sp{p} \cdot \Sp{x}}
\abs{\hD (\omega, \Sp{p})}^{1/2}
\nonumber\\
w_{\tau,\Sp{x}} (\omega, \Sp{p}) 
&=& 
\E^{-\I \tau \omega + \I \Sp{p} \cdot \Sp{x}}
\abs{\hD (\omega, \Sp{p})}^{-1/2}
\;
\hD (\omega, \Sp{p}) .
\end{eqnarray} 
The Gram constant is $\gamma_D = \Vert \hD \Vert_1$, 
and the dominated convergence theorem implies 
continuity of the maps $(\tau,\Sp{x}) \mapsto v_{\tau,\Sp{x}}$ and 
$(\tau,\Sp{x}) \mapsto w_{\tau,\Sp{x}}$.

However, the $\hC$ in (\ref{prop.four}) decays so slowly as a function 
of the Matsubara frequency $\omega$
that $\hC \not\in L^1(\hat{\bX}_d)$
(this must be so because $\CC$ itself has a discontinuity, 
so its Fourier transform cannot be integrable). 
Thus the standard procedure to obtain a Gram representation fails. 

\begin{lemma}
\label{non.sep.1}
Let $U$ be the $(\bR \times \bR)$-matrix given by
\begin{equation}
\label{eq13}
U_{st} = \left\{
\begin{array}{rcc}
1 &,& s \geq t \\
0 &,& s<t
\end{array}
\right. .
\end{equation}
If $(\cH,v,w)$ is a Gram representation of $U$,
then $\cH$ is non-separable and the maps $t \mapsto v_t $ and $t \mapsto w_t$
are discontinuous at all $t \in \bR$. 
\end{lemma}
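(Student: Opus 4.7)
The plan is to handle the two claims separately. The discontinuity statement is short: if $t \mapsto v_t$ were norm-continuous at some $t_0 \in \bR$, then for sequences $s_n \searrow t_0$ and $s'_n \nearrow t_0$ (strictly from above and below) one would have $v_{s_n}, v_{s'_n} \to v_{t_0}$ in norm, and pairing against the fixed vector $w_{t_0}$ would force $\langle v_{s_n}, w_{t_0}\rangle$ and $\langle v_{s'_n}, w_{t_0}\rangle$ to share the common limit $\langle v_{t_0}, w_{t_0}\rangle$; but these two inner products equal $U_{s_n,t_0}=1$ and $U_{s'_n,t_0}=0$ respectively, a contradiction. The same argument applied with the roles of $v$ and $w$ exchanged (pairing against $v_{t_0}$) gives discontinuity of $t \mapsto w_t$.

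For non-separability I would argue by contradiction: suppose $\cH$ is separable. The vectors $\{v_s\}_{s\in\bR}$ are uniformly bounded by $\gamma_U$, so every sequence drawn from them has a weakly convergent subsequence in $\cH$. For each $t \in \bR$ I fix a strictly increasing sequence $s_n \nearrow t$ and extract a weakly convergent subsequence $v_{s_{n_k}} \rightharpoonup v_t^L \in \cH$. Setting $\Delta_t := v_t - v_t^L$, the key computation is, for any $u \in \bR$,
\[
\langle v_t^L, w_u \rangle = \lim_k \langle v_{s_{n_k}}, w_u\rangle = \lim_k 1_{u \le s_{n_k}} = 1_{u < t},
\]
whereas $\langle v_t, w_u\rangle = 1_{u \le t}$; subtracting gives $\langle \Delta_t, w_u\rangle = \delta_{tu}$, the Kronecker delta. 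Then for any $s \neq t$, Cauchy-Schwarz against $w_t$ gives $\norm{\Delta_t - \Delta_s} \ge 1/\norm{w_t} \ge 1/\gamma_U$, so $\{\Delta_t\}_{t \in \bR}$ is an uncountable, uniformly $(1/\gamma_U)$-separated family in $\cH$, contradicting separability.

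The main obstacle, as I see it, is the conceptual step of giving up on norm continuity (which the first part of the lemma precisely rules out) and replacing it with a weak-subsequential-limit device. This is what allows one to extract a ``left-boundary value'' $v_t^L$; the vector $v_t - v_t^L$ then encodes, via its inner products against the dual family $\{w_u\}$, exactly the one-sided jump of $U$ along the diagonal as a Kronecker delta in $(t,u)$, and this delta-function structure automatically powers both the uniform separation bound and the resulting contradiction with separability.
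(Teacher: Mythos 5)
Your proof is correct, but for the non-separability claim it takes a genuinely different and heavier route than the paper's. The paper's entire argument rests on one direct observation: pairing $w_t - w_{t'}$ against $v_{\min(t,t')}$ gives $\abs{\langle v_{\min(t,t')},\, w_t - w_{t'}\rangle} = 1$ (it equals $U_{tt}-U_{tt'}=1$ when $t<t'$ and $U_{t't}-U_{t't'}=-1$ when $t'<t$), whence by Cauchy--Schwarz $\norm{w_t - w_{t'}} \ge 1/\gamma_U$ for \emph{all} $t \ne t'$. That single inequality delivers both conclusions at once: everywhere-discontinuity of $t\mapsto w_t$ (and of $t \mapsto v_t$ by exchanging roles) is immediate, and $\{w_t : t\in\bR\}$ is already an uncountable, uniformly $(1/\gamma_U)$-separated family, so no countable subset of $\cH$ can be dense. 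Your weak-subsequential-limit construction of the jump vectors $\Delta_t$ with $\langle\Delta_t, w_u\rangle = \delta_{tu}$ is valid --- weak sequential compactness of bounded sets holds in any Hilbert space, so the extraction is legitimate even before you know whether $\cH$ is separable, and the resulting family $\{\Delta_t\}$ is indeed uniformly separated --- but it is machinery you do not need: the separation is already present in the vectors $w_t$ themselves, with no limiting procedure. What your construction buys is a slightly sharper structural picture (the diagonal jump of $U$ realized by a biorthogonal-type system against $\{w_u\}$); for the lemma as stated it is overkill, and the direct separation bound is the observation worth internalizing, since it is also what powers the corollary about the many-fermion covariance.
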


\begin{proof}
For all $s,t \in \bR$, $U_{st} = \langle v_s, w_t \rangle$, 
so for $t' > t$, $\langle v_t, \; w_t - w_{t'} \rangle  =1$
and for $t' < t$,   $\langle v_{t'}, \; w_t - w_{t'} \rangle  = - 1$.
Thus, by the Schwarz inequality and the  bound $\sup_t \norm{v_t} \le \gamma_U$,
\begin{equation}\label{wtwtp}
\forall t,t': \; t  \ne t'  \Longrightarrow \norm{w_t - w_{t'}} \ge \frac{1}{\gamma_U} .
\end{equation}
Thus the map $t \to w_t $ is discontinuous everywhere. Reversing the roles of 
$v_t$ and $w_t$ in the above argument implies the same for the map $t \to v_t$. 
An obvious variant of this argument implies discontinuity in the weak topology as well. 
Set $ W = \{ w_t : t \in \bR \}$. 
Let $A\subset \cH $ be countable. For all $x \in A$, eq.\ (\ref{wtwtp}) 
and the triangle inequality imply that
$\{ y \in \cH : \norm{y-x} < \frac{1}{4 \gamma_U} \}$ contains at most one element of  $W$.
Thus the $\frac{1}{4 \gamma_U}$ -- neighbourhood of $A$ contains only countably 
many elements of $W$, hence $A$ is not dense in $\cH$.  
\end{proof}

\begin{koro}
\label{non.sep.2}
The covariance matrix of the many--fermion system given by 
(\ref{eq:Cttpxxp}) has no Gram representation on a separable 
Hilbert space. 
\end{koro}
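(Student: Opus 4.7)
The plan is to reduce the statement to Lemma \ref{non.sep.1} by freezing a single spatial point, which isolates the $\tau=0$ jump of $\CC$ and yields an analogue of the step-function obstruction used there. Concretely, suppose $C$ does have a Gram representation $(\cH,v,w)$ with Gram constant $\gamma_C$. Fix $\Sp{x}_0 \in \Xspace$ and set $V(t):=v_{(t,\Sp{x}_0)}$, $W(t):=w_{(t,\Sp{x}_0)}$ for $t\in[0,\beta)$. Then $\norm{V(t)}, \norm{W(t)} \le \gamma_C$ and, because the exponential in (\ref{eq:Cttpxxp}) collapses to $1$,
\[
\langle V(s), W(t) \rangle \;=\; g(s-t), \qquad g(\tau) \;:=\; \int_\cB \CC(\tau, E(\Sp{p}))\,\mumoma(\dd\Sp{p}).
\]

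Next I would extract a uniform-in-$\tau$ jump of $g$ at $0$ from the explicit formula (\ref{eq:CtaE}). For $\tau\in(0,\beta)$ the second case applies to $g(0)$ and the first case applies to $g(\tau)$, so
\[
g(\tau) - g(0) \;=\; -\int_\cB \bigl[(1-f_\beta(E(\Sp{p})))\,\E^{-\tau E(\Sp{p})} + f_\beta(E(\Sp{p}))\bigr]\mumoma(\dd\Sp{p}).
\]
The integrand is pointwise $\ge f_\beta(E(\Sp{p}))>0$, hence
\[
|g(\tau)-g(0)| \;\ge\; \kappa \;:=\; \int_\cB f_\beta(E(\Sp{p}))\,\mumoma(\dd\Sp{p}) \;>\; 0
\]
uniformly in $\tau\in(0,\beta)$. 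This uniform lower bound plays the role, for the present covariance, of the $\pm 1$ matrix-entry differences that drove the proof of Lemma \ref{non.sep.1}.

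From here the conclusion follows by mimicking that lemma. For $t' > t$ in $[0,\beta)$, choose $s=t'$ to get
\[
|\langle V(t'), W(t) - W(t')\rangle| \;=\; |g(t'-t) - g(0)| \;\ge\; \kappa,
\]
so Schwarz together with $\norm{V(t')}\le\gamma_C$ yields $\norm{W(t)-W(t')} \ge \kappa/\gamma_C$; by symmetry of the norm the same bound holds for every pair $t\ne t'$ in $[0,\beta)$. Therefore $\{W(t):t\in[0,\beta)\}$ is an uncountable, pairwise $(\kappa/\gamma_C)$-separated subset of $\cH$, and the covering argument at the end of Lemma \ref{non.sep.1} shows that no countable subset of $\cH$ can be dense, so $\cH$ is non-separable.

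The only non-routine ingredient is the positivity of $\kappa$, but this is immediate from $f_\beta>0$ everywhere together with $\mumoma(\cB)>0$ (and measurability of $E$, which holds in every case discussed in Section \ref{sec:fermUV}). Everything else is a direct one-parameter translation of the $U$-matrix argument of Lemma \ref{non.sep.1}.
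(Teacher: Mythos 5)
Your proof is correct, and it takes a genuinely different route from the paper's. The paper argues by decomposition: it writes $\CC(\tau,E)=\DD(\tau,E)+\CC(\tau,0)$ with $\DD$ continuous in $\tau$ and with $\ell^1$ Fourier transform, so the corresponding $D$ has a standard Gram representation of the form (\ref{eq:stanGram}) on a separable space; it then observes that $C-D=\delta^a_{\Sp{x},\Sp{x}'}(U_{\tau,\tau'}-\tfrac12)$ and uses a direct--sum argument (``$C=D+D'$ has a Gram representation iff $D'$ does'') to invoke Lemma \ref{non.sep.1} as a black box. You instead bypass the decomposition entirely: by freezing a spatial point you show the full covariance already exhibits a jump $\abs{g(\tau)-g(0)}\ge\kappa=\int_\cB f_\beta(E(\Sp{p}))\,\mumoma(\dd\Sp{p})>0$ uniformly in $\tau\in(0,\beta)$, and you rerun the separation-plus-covering argument of the lemma directly on $\{W(t)\}$. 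Your version is more self-contained (no Fourier computation of $\hat\DD$, no direct-sum lemma) and your lower bound $\kappa/\gamma_C$ on the mutual distances is explicit; the paper's version buys a cleaner structural statement -- it isolates the $E$-independent step function $U-\tfrac12$ as the exact obstruction, separating it from a harmless Gram-representable remainder, which is the viewpoint the rest of the paper builds on. All the steps you need (the case distinction in (\ref{eq:CtaE}) for $\tau\in(0,\beta)$ versus $\tau=0$, positivity of $f_\beta$, $\mumoma(\cB)>0$, and the reduction of all pairs $t\ne t'$ to the ordered case via symmetry of the norm) check out.
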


\begin{proof}
The function $\tau \mapsto \DD (\tau,E) = \CC(\tau,E) - \CC (\tau,0)$ is continuous in $\tau$. 
Its Fourier transform, 
\begin{equation}
\omega 
\mapsto
- \; \frac{E}{\I \omega (\I \omega - E)} \; ,
\end{equation}
is in $\ell^1$. Thus 
\begin{equation}
D_{(\tau,\Sp{x}),(\tau',\Sp{x}')}
=
\int
\mumoma (\dd p) \; 
\E^{\I \Sp{p} \cdot (\Sp{x} - \Sp{x}')}
\; 
\DD(\tau-\tau', E(\Sp{p}))
\end{equation}
has the Gram representation given in (\ref{eq:stanGram}).
An elementary argument involving direct sums of Hilbert spaces shows that
$C=D+D'$ has a Gram representation if and only if 
$D'$ has a Gram representations. Assume that $C$, given by 
(\ref{eq:Cttpxxp}), has a Gram representation on a separable Hilbert space $\cH$. 
Then $C - D$ has a Gram representation on a direct sum of separable Hilbert spaces, 
which is itself separable. But $C-D$ is
\begin{equation}
\delta^a_{\Sp{x},\Sp{x}'} 
(U_{\tau,\tau'} - \frac12)
\end{equation}
with $\delta^a_{\Sp{x},\Sp{x}'} 
= 
\int \mumoma (\dd p) \; 
\E^{\I \Sp{p} \cdot (\Sp{x} - \Sp{x}')}$
and $U$ as in  Lemma \ref{non.sep.1}, which has no Gram representation
on any separable Hilbert space. 
\end{proof}

\noindent
Our main use of Gram representations is, of course, to bound determinants 
of the type occurring in (\ref{eq:detbou}).
Lemma \ref{non.sep.1} does not exclude that a useful Gram representation, 
i.e.\ one with a good \GC, can be found, but it shows that the representation
will be very different from the ones used so far in fermion models, which all
involve separable Hilbert spaces and where continuity of the maps $v$ and $w$ holds.  

One can attempt to circumvent the above problem by introducing a UV cutoff $\Omega > 0$, 
which restricts the sum over frequencies $\omega$ to a finite set
(for instance by regularizing to $\hat \CC_\chi (\omega, \Sp{p}) = \hat \CC (\omega, \Sp{p}) \;
\chi(\omega/\Omega)$, where $\chi$ is a smooth function of compact support). 
This obviously makes the standard Gram constant finite, 
Of course, a UV cutoff cannot
simply be imposed, because it implies that the time-ordered
imaginary-time correlation functions are continuous and therefore
not physical. 
The \GC\ $\gamma_{\CC_\chi} \sim \log\Omega$
diverges for $\Omega \to \infty$. 
One can attempt to perform the limit $\Omega \to \infty$ by multiscale and renormalization
techniques. 
The approach via \DB s developed in the next sections is, however, much simpler 
and more natural that such a multiscale approach, and it makes the latter unnecessary.

Recall that momentum space is $\cB = \bR^d$ for an continuous system and
$\cB = \bR^d / \Gamma^\#$ for a system on a lattice $\Gamma$, 
that in the continuum case, $\mumoma$ contains an ultraviolet cutoff, 
and that $\cB_L = L^{-1} \Gamma^\# / \Gamma^\#$ 
is the corresponding momentum space for the finite--volume system. 
The main result about the \DB\ of many--fermion covariances is as follows. 

\begin{satz}\label{nice}
Let $E: \cB \to \bR $ be bounded and measurable.
Then the fermionic covariance matrix 
$C^{(\scalefunc)}$ given in (\ref{eq:Cscttpxxp}) has determinant bound 
\begin{equation}\label{eq:nicedB}
\dB_{C^{(\scalefunc)}} = 2 
\left(\int \mumoma (\dd \Sp{p}) \;  \abs{\scalefunc (\Sp{p})}
\right)^{1/2} .
\end{equation}
In particular, the covariance  $C$ defined  in (\ref{eq:Cttpxxp}) 
has $\dB_C = 2 \mumom^a (\cB)^{1/2}$.
\end{satz}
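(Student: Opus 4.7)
The plan is to isolate the time-ordering discontinuity of $\CC(\tau-\tau',E)$ by applying Theorem~\ref{haupt} and then produce Gram representations of the two smooth pieces that remain. By (\ref{eq:CtaE}), for $\tau,\tau'\in[0,\beta)$,
\[
C^{(\scalefunc)}_{(\tau,\Sp{x}),(\tau',\Sp{x}')}
=-A_{(\tau,\Sp{x}),(\tau',\Sp{x}')}\,1_{\tau>\tau'}
+B_{(\tau,\Sp{x}),(\tau',\Sp{x}')}\,1_{\tau\le\tau'},
\]
where $A$ is obtained from (\ref{eq:Cscttpxxp}) by replacing $\CC(\tau-\tau',E(\Sp{p}))$ with $\E^{-(\tau-\tau')E(\Sp{p})}(1-f_\beta(E(\Sp{p})))$, and $B$ with $\E^{-(\tau-\tau')E(\Sp{p})}f_\beta(E(\Sp{p}))$. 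These indicators match the template (\ref{Mhaupt}) with $\cJ=\bR$ and the time projections (suitably signed) playing the roles of $\varphi'_l,\varphi_l$, so Theorem~\ref{haupt} will give $\dB_{C^{(\scalefunc)}}\le\gamma_A+\gamma_B$ once Gram representations of $A$ and $B$ are found. Crucially, only the values of $A$ on $\{\tau>\tau'\}$ and of $B$ on $\{\tau\le\tau'\}$ are constrained, so one may modify them on the complementary regions to facilitate the construction.

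For $A$ I take $\cH=L^2(\cB,\mumoma)\otimes L^2(\bR,\dd k/(2\pi))$ and build the Gram vectors piecewise in the sign of $E(\Sp{p})$, using the Ornstein--Uhlenbeck factorisation
\[
\E^{-|t|\mu}=\int_{-\infty}^{\infty}\frac{2\mu}{k^2+\mu^2}\,\E^{-\I kt}\,\frac{\dd k}{2\pi}\qquad(\mu>0),
\]
whose prefactor has unit integral. On $\{E>0\}$, the kernel $\E^{-|\tau-\tau'|E}(1-f_\beta(E))$ extends $A$'s desired value to all $\tau,\tau'$ while remaining bounded by $1$, so I set $\phi^A(\tau,\Sp{p},k)=\psi^A(\tau,\Sp{p},k)=(2E/(k^2+E^2))^{1/2}(1-f_\beta(E))^{1/2}\E^{-\I k\tau}$. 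On $\{E<0\}$ this naive extension diverges, but the antiperiodicity identity $1-f_\beta(E)=\E^{\beta E}f_\beta(E)$ rewrites, for $\tau>\tau'$,
\[
\E^{-(\tau-\tau')E}(1-f_\beta(E))=\E^{-(\beta-\tau+\tau')|E|}f_\beta(E),
\]
which is again an Ornstein--Uhlenbeck kernel, with $\mu=|E|$ and shifted argument $(\beta-\tau)-(-\tau')>0$; this motivates $\phi^A(\tau,\Sp{p},k)=(2|E|/(k^2+E^2))^{1/2}f_\beta(E)^{1/2}\E^{-\I k(\beta-\tau)}$ and $\psi^A(\tau',\Sp{p},k)=(2|E|/(k^2+E^2))^{1/2}f_\beta(E)^{1/2}\E^{\I k\tau'}$. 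Taking $V^A_{(\tau,\Sp{x})}(\Sp{p},k)=\overline{\sigma(\Sp{p})}\,|\scalefunc(\Sp{p})|^{1/2}\E^{-\I\Sp{p}\cdot\Sp{x}}\phi^A$ and $W^A_{(\tau',\Sp{x}')}(\Sp{p},k)=|\scalefunc(\Sp{p})|^{1/2}\E^{-\I\Sp{p}\cdot\Sp{x}'}\psi^A$, where $\sigma(\Sp{p})=\scalefunc(\Sp{p})/|\scalefunc(\Sp{p})|$, one verifies by direct computation that $\langle V^A,W^A\rangle=A$ on $\{\tau>\tau'\}$.

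Since $\int(2|\mu|/(k^2+\mu^2))\,\dd k/(2\pi)=1$, the $k$-norm of $\phi^A(\tau,\Sp{p},\cdot)$ equals $(1-f_\beta(E(\Sp{p})))\,1_{E(\Sp{p})>0}+f_\beta(E(\Sp{p}))\,1_{E(\Sp{p})<0}\le1$ uniformly, and the same for $\psi^A$; hence $\|V^A\|^2,\|W^A\|^2\le\int_\cB\mumoma|\scalefunc|$, so $\gamma_A\le(\int_\cB\mumoma|\scalefunc|)^{1/2}$. The construction for $B$ is symmetric: the identity $f_\beta(E)=\E^{-\beta E}(1-f_\beta(E))$ now requires the $\beta$-shift on $\{E>0\}$ and the direct representation on $\{E<0\}$, yielding the same bound $\gamma_B\le(\int_\cB\mumoma|\scalefunc|)^{1/2}$. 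Theorem~\ref{haupt} applied with $C_0=0$, $C_1=-A$, $C_2=B$ then delivers $\dB_{C^{(\scalefunc)}}\le\gamma_A+\gamma_B\le 2(\int_\cB\mumoma|\scalefunc|)^{1/2}$, and specialising to $\scalefunc\equiv1$ reproduces the final sentence of the theorem. The main obstacle is that no scalar factorisation $g(\tau,E)h(\tau',E)$ of $\E^{-(\tau-\tau')E}(1-f_\beta(E))$ on $\{\tau>\tau'\}$ keeps both $|g|,|h|$ uniformly bounded over $E\in\bR$ and $\tau,\tau'\in[0,\beta)$---however the weight $1-f_\beta$ is split, the exponentials $\E^{\pm\beta E}$ at large $|E|$ force one factor to blow up---and this is resolved by enlarging the Hilbert space through the auxiliary $k$-integration and, on $\{E<0\}$, by the $\beta$-shift that converts the growing $\E^{-(\tau-\tau')E}$ into a decaying Ornstein--Uhlenbeck kernel in $|E|$.
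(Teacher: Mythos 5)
Your construction is, in substance, the paper's own proof (Lemma~\ref{le:5.4} combined with Corollary~\ref{opt2}): the same split of $\CC(\tau-\tau',E)$ into four cases according to the signs of $\tau-\tau'$ and $E(\Sp{p})$, the same use of the identity $f_\beta(-E)=\E^{\beta E}f_\beta(E)$ to shift the time argument by $\beta$ so that every exponential decays, the same Cauchy--Poisson factorisation of $\E^{-\veps\tau}f_\beta(-\veps)$ over an auxiliary $L^2(\bR)$ (your kernel $2\mu/(k^2+\mu^2)$ is exactly $|\Phi(s,\veps)|^2$ of (\ref{eq:Phi}) up to normalisation), and the same application of Theorem~\ref{haupt} with $\varphi'_1=\varphi_1=t$ and $\varphi'_2=\varphi_2=-t$, giving $\gamma_A+\gamma_B\le 2\norm{\scalefunc}_1^{1/2}$. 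The norm accounting also agrees: the $\{E>0\}$ and $\{E<0\}$ pieces are orthogonal, so each of the two Gram factors contributes $\norm{\scalefunc}_1^{1/2}$.

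The one genuine gap is the level set $\{\Sp{p}\in\cB: E(\Sp{p})=0\}$. Your Gram vectors are defined piecewise via $1_{E>0}$ and $1_{E<0}$ and therefore vanish where $E(\Sp{p})=0$, while $\CC(\tau,0)=\frac12-1_{\tau>0}\neq 0$; hence the claimed identity $\langle V^A,W^A\rangle=A$ on $\{\tau>\tau'\}$ fails whenever $\mumoma(\{E=0\})>0$, which the hypothesis ``bounded and measurable'' allows. The paper isolates exactly this point: the representation is proved only under the assumption (\ref{measure0}) that $\{E=0\}$ has measure zero, and the theorem in full generality is then obtained by replacing $E$ with $E_\veps$ (equal to $\veps/2$ on $\{|E|\le\veps/2\}$ and to $E$ elsewhere), using that for $\beta<\infty$ the covariance depends continuously on $E$ in $\norm{\cdot}_\infty$ and that the bound $2\norm{\scalefunc}_1^{1/2}$ is uniform in $\veps$. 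You need this limiting step, or alternatively a third Gram piece handling $\{E=0\}$ directly (which costs nothing extra only if you absorb it into the existing two summands of Theorem~\ref{haupt}, since $|\CC(\tau,0)|=\frac12$), to cover the theorem as stated.
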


\noindent 
Theorem \ref{nice} is proven after Corollary \ref{opt2}. 
As mentioned after Theorem \ref{haupt}, this bound is optimal up to the 
prefactor $2$. 

In Section \ref{sec:noUVcutoff}, we discuss the decay constant of these 
covariances and prove a convergence theorem for the expansion for the 
fermionic effective action. 

In Section \ref{sec:largeomega}, we discuss the properties of  covariances
obtained by a splitting into small and large frequencies and prove that the 
integration over fields with large frequencies, which usually is the first 
step in a multiscale treatment, is given by convergent expansions, 
for arbitrarily large initial interaction strength. 

When rewriting traces using Trotter--type formulas, to obtain functional 
integral representations, one typically 
obtains time--discretized covariances. The bounds given here
apply to them as well, uniformly in the parameter $n$ that 
defines the discretization \cite{PeSa}.

\section{Determinants and chronological products}\label{sec:det}

In this section we show that determinants corresponding to a general
chronological  ordering have good determinant bounds
and prove Theorem \ref{haupt}. 
We first recall some standard facts and fix notation. 

\begin{definition}\label{finidi}
Let $V$ be a finite--dimensional vector space over $\bC$.
\begin{enumerate}

\item
Let $k \in \bN$. A totally antisymmetric $k$--linear map $\alpha:
V^k \to \bC$
is called \emph{\EKLF}. 
The vector space of all \EKLF s is
identified with $\BW^k V^*$. We also set $\BW^0 V^* = \bC$.

\item
Let $k,l \in \bN$. The \emph{exterior product} of
$\alpha \in \BW^k V^*$  and $\beta \in \BW^l V^*$,
$\alpha \wedge \beta \in \BW^{k+l} V^*$, 
acts on $v_1, \dots, v_{k+l} \in V$ as
\begin{eqnarray}
&&\hskip-1cm
(\alpha \wedge \beta ) \; (v_1, \dots , v_{k+l})
\\
&=&
\frac{1}{k!l!}
\sum\limits_{\sigma\in S_{k+l}} \mbox{ sgn}\,(\sigma) 
\;
\alpha(v_{\sigma(1)},\dots,  v_{\sigma(k)})\,
\beta(v_{\sigma(k+1)}, \dots, v_{\sigma(k+l)}).
 \nonumber
\end{eqnarray}
Here $S_n$ denotes the set of permutations of $\{ 1, \ldots, n\}$.

The \emph{exterior algebra} $\BW V^*$ over the vector space $V$ is
\begin{equation}
\BW V^*= \bigoplus\limits_{k=0}^\infty \BW^k V^*
\end{equation}
We identify $\BW V $ with $\BW V^{**}$, the exterior algebra over $V^*$.

\end{enumerate}
\end{definition}

\noindent
The following condition defines a duality between the spaces
$\BW^k V^*$ and $\BW^k V$: 
for $\alpha = \alpha_1 \wedge \dots \wedge \alpha_k \in
\BW^k V^*$ and $v = v_1 \wedge \dots \wedge v_k \in
\BW^k V$, 
\begin{equation}
\label{eq23}
\langle \alpha, v \rangle = \det (\alpha_i(v_j))_{i,j} \; .
\end{equation}
This duality defines a vector space isomorphism  
$\BW^k V^* \to (\BW^k V)^*$:
\begin{equation}
\label{eq24}
 \langle \alpha_1 \wedge \dots \wedge \alpha_k , v_1
\wedge \dots \wedge v_k \rangle = \alpha_1 \wedge \dots \wedge
\alpha_k (v_1, \dots, v_k)
\end{equation}
(this isomorphism is unique only up to a multiplicative factor, 
and different conventions are used in the literature). 
Finally, the isomorphisms (\ref{eq24}), $k \in \bN$, canonically 
induce an isomorphism between $\BW V^*$ and $\left(\BW V \right)^*$.

\begin{definition}\label{critters}
Let $\mbox{End} \BW V^*$ denote the set of endomorphisms of $\BW V^*$.

\begin{enumerate}

\item
\label{verjung}
For $w \in \BW V$ define
$w \lrcorner \in \hbox{End} \, \BW V^*$ by the
condition
\begin{equation}
\label{eq25}
\forall v \in \BW V: \quad
\langle w  \lrcorner  \alpha, v \rangle = \langle
\alpha, w\wedge v \rangle.
\end{equation}

\item
\label{anticommut}
For $\alpha \in V^*$ let $(\alpha \wedge)
\in \hbox{End} \, \BW V^*$ be defined by
\begin{equation}
\forall \beta \in \BW V^*: \quad
(\alpha \wedge) : \beta \mapsto \alpha \wedge \beta
.
\end{equation}

\end{enumerate}
\end{definition}

\begin{lemma}
\label{CAR}
These endomorphisms obey canonical anticommutation relations:
\begin{enumerate}
\item{
$(\alpha_1 \wedge) (\alpha_2 \wedge) + (\alpha_2 \wedge) (\alpha_1
\wedge) = 0$, for all $\alpha_1, \alpha_2 \in V^*$. }
\item{
$u_1 \lrcorner u_2 \lrcorner + u_2 \lrcorner u_1 \lrcorner = 0$,
for all $u_1, u_2 \in V$. }
\item{
$(\alpha \wedge) u \lrcorner + u \lrcorner (\alpha \wedge) =
\alpha(u)$, for all $\alpha \in V^*$ and all $u \in V$. }
\end{enumerate}
\end{lemma}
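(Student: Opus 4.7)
The plan is to exploit the defining conditions in Definition \ref{critters} together with the antisymmetry of the exterior product on $\BW V^*$ and $\BW V$. Parts (1) and (2) reduce quickly to this antisymmetry; part (3), which supplies the ``$\{a,a^*\}=1$'' structure, is the substantive identity, and I would verify it by a basis computation (or equivalently by a graded Leibniz rule for $u \lrcorner$).

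Part (1) is immediate from associativity of $\wedge$ and the fact that $\alpha_1 \wedge \alpha_2 = -\alpha_2 \wedge \alpha_1$ for one-forms: for any $\beta \in \BW V^*$,
\[
(\alpha_1 \wedge)(\alpha_2 \wedge)\beta
= \alpha_1 \wedge \alpha_2 \wedge \beta
= -\alpha_2 \wedge \alpha_1 \wedge \beta
= -(\alpha_2 \wedge)(\alpha_1 \wedge)\beta.
\]
For part (2) I apply (\ref{eq25}) twice: for any $\alpha \in \BW V^*$ and $v \in \BW V$,
\[
\langle u_1 \lrcorner \, u_2 \lrcorner \, \alpha, v\rangle
= \langle u_2 \lrcorner \, \alpha, u_1 \wedge v\rangle
= \langle \alpha, u_2 \wedge u_1 \wedge v\rangle
= -\langle \alpha, u_1 \wedge u_2 \wedge v\rangle,
\]
and the analogous chain with $u_1,u_2$ swapped gives $\langle u_2 \lrcorner \, u_1 \lrcorner \, \alpha, v\rangle = \langle \alpha, u_1 \wedge u_2 \wedge v\rangle$. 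Adding and invoking nondegeneracy of the pairing (\ref{eq23}) on the finite-dimensional exterior powers yields (2).

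For part (3) I would reduce to a basis calculation. Picking a basis $(e_i)$ of $V$ with dual basis $(e^i)$ of $V^*$ and using multilinearity of both sides in $\alpha$ and $u$, it suffices to check that $(e^i \wedge) \, e_j \lrcorner + e_j \lrcorner \, (e^i \wedge)$ acts as multiplication by $\delta_{ij} = e^i(e_j)$ on each monomial $e^I = e^{i_1} \wedge \cdots \wedge e^{i_k}$ with $i_1 < \cdots < i_k$. A short use of (\ref{eq25}) yields the expected insertion/deletion formulas: $(e^i \wedge) e^I$ vanishes when $i \in I$ and otherwise inserts $e^i$ with the sign needed to restore increasing order, while $e_j \lrcorner e^I$ vanishes when $j \notin I$ and, for $j = i_r$, equals $(-1)^{r-1} e^{I \setminus \{j\}}$. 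A case analysis according to whether $i=j$ and whether each of $i,j$ lies in $I$ then reproduces $\delta_{ij}\, e^I$ in every case, by standard cancellations between insertion-then-deletion and deletion-then-insertion.

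The only real obstacle is sign bookkeeping: both the insertion and the deletion operations introduce Koszul signs and it is easy to err by one. A cleaner route that avoids the explicit case work is to first establish that $u \lrcorner$ is a graded derivation of degree $-1$ on $\BW V^*$, namely
\[
u \lrcorner (\alpha \wedge \beta) = (u \lrcorner \alpha) \wedge \beta + (-1)^p\, \alpha \wedge (u \lrcorner \beta) \qquad (\alpha \in \BW^p V^*),
\]
which I would prove by pairing both sides with an arbitrary $v \in \BW V$, invoking (\ref{eq25}), and expanding the determinant in (\ref{eq23}) along the column indexed by $u$. Specialising to $\alpha \in V^*$ and noting that $u \lrcorner \alpha = \alpha(u) \in \BW^0 V^* = \bC$ (read off directly from (\ref{eq25}) with $v = 1 \in \BW^0 V$) gives precisely identity (3).
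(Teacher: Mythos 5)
Your proposal is correct and, in its ``cleaner route'' for item 3, coincides exactly with the paper's own argument, which derives the anticommutator $(\alpha\wedge)u\lrcorner + u\lrcorner(\alpha\wedge) = \alpha(u)$ from the fact that $u\lrcorner$ is an antiderivation of degree $-1$; the paper simply declares items 1 and 2 ``clear,'' which you flesh out correctly via antisymmetry of $\wedge$ and a double application of (\ref{eq25}) together with nondegeneracy of the pairing.
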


\begin{proof}
Items 1 and 2 are clear. Item 3 holds  because for all $u \in V$, 
$u \lrcorner: \BW^k V^* \to \BW^{k-1} V^* $
is an antiderivation of degree -1: for all 
$\alpha \in \BW^k V^*$ and all $\beta \in \BW V^*$,
$
u \lrcorner (\alpha \wedge \beta) = (u\lrcorner \alpha) \wedge
\beta + (-1)^k \alpha \wedge(u \lrcorner \beta).
$
\end{proof}

\begin{lemma}\label{le33}
Let $n\in \bN$,  $\alpha_1, \dots, \alpha_n \in V^*$ and
$v_1,\dots, v_n \in V$. Then
\begin{equation}
\label{det} \det \Big(\alpha_i(v_j) \Big)_{1 \le i,j \le n} =
(-1)^{\frac{n(n-1)}{2}} v_1 \lrcorner \dots v_n
\lrcorner (\alpha_1 \wedge \dots \wedge \alpha_n).
\end{equation}
\end{lemma}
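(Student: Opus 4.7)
The plan is to reduce both sides to the canonical pairing from (\ref{eq24}) and track a sign coming from reversal of the wedge product. First I note that the iterated interior product $v_1 \lrcorner \dots v_n \lrcorner (\alpha_1 \wedge \dots \wedge \alpha_n)$ lies in $\BW^0 V^* = \bC$, so it equals its own pairing with $1 \in \BW^0 V$. The strategy is then to peel off the $v_i \lrcorner$'s one at a time using the defining relation of Definition \ref{critters}(\ref{verjung}), each step moving a factor $v_i$ into the right slot of the pairing.

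Concretely, I would apply $\langle w \lrcorner \alpha, u \rangle = \langle \alpha, w \wedge u\rangle$ repeatedly: starting from $\langle v_1 \lrcorner \dots v_n \lrcorner (\alpha_1 \wedge \dots \wedge \alpha_n), 1\rangle$, one application gives $\langle v_2 \lrcorner \dots v_n \lrcorner (\alpha_1 \wedge \dots \wedge \alpha_n), v_1 \rangle$; a second gives $\langle v_3 \lrcorner \dots v_n \lrcorner (\alpha_1 \wedge \dots \wedge \alpha_n), v_2 \wedge v_1 \rangle$; and inductively the $k$-th application yields $\langle v_{k+1} \lrcorner \dots v_n \lrcorner (\alpha_1 \wedge \dots \wedge \alpha_n), v_k \wedge v_{k-1} \wedge \dots \wedge v_1\rangle$. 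After $n$ steps one arrives at $\langle \alpha_1 \wedge \dots \wedge \alpha_n, v_n \wedge v_{n-1} \wedge \dots \wedge v_1 \rangle$.

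It then remains to reverse the order of the wedge product on the right. Since reversing a string of $n$ elements in an antisymmetric product requires $\binom{n}{2} = n(n-1)/2$ adjacent transpositions, one has $v_n \wedge \dots \wedge v_1 = (-1)^{n(n-1)/2}\, v_1 \wedge \dots \wedge v_n$. Plugging this in and using the pairing formula (\ref{eq24}), the right-hand side of (\ref{det}) becomes $(-1)^{n(n-1)/2}\cdot (-1)^{n(n-1)/2}\det(\alpha_i(v_j)) = \det(\alpha_i(v_j))$, which is exactly the desired identity.

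I do not expect any genuine obstacle here: the proof is essentially a bookkeeping exercise once the recursive unpeeling is organised correctly. The only point where care is needed is the sign: one must verify that the induction applies the defining relation in the right direction (so that the $v_i$'s accumulate in reversed order on the right of the pairing) and that the combinatorial factor from reversal matches $(-1)^{n(n-1)/2}$. A clean way to present the induction is to state and prove the intermediate identity $v_k \lrcorner \dots v_n \lrcorner (\alpha_1\wedge\dots\wedge\alpha_n) = \langle \alpha_1\wedge\dots\wedge\alpha_n, v_n\wedge\dots\wedge v_k\rangle$ for $k = n, n-1, \dots, 1$ as a single induction on $n-k$, rather than doing the $n$ applications in-line.
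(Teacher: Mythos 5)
Your argument is correct and is essentially the paper's own proof: both peel off the $v_i\lrcorner$'s inductively via the defining relation (\ref{eq25}) to reach $\langle \alpha_1\wedge\dots\wedge\alpha_n,\, v_n\wedge\dots\wedge v_1\rangle$, then reverse the wedge order at the cost of $(-1)^{n(n-1)/2}$ and apply the determinant pairing. The only nit is that the final pairing formula you invoke is (\ref{eq23}) rather than (\ref{eq24}); the sign bookkeeping is right.
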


\begin{proof}
Observe that (\ref{det}) makes  sense because the
right hand side of this equation is an element of $\BW^0 V^*
= \bC$. Eq.\  (\ref{eq25}) implies by induction that
\begin{equation}
v_1\lrcorner \dots v_n \lrcorner(\alpha_1 \wedge \dots \wedge
\alpha_n)
= 
\langle \alpha_1 \wedge \dots
\wedge \alpha_n, v_n \wedge \dots \wedge v_1 \rangle 
\end{equation}
Inverting the order of the $v_i$ and using (\ref{eq23}) gives the claim.
\end{proof}

\begin{definition}\label{ordnung}
Let $(\cJ, \succ)$ be a totally ordered set. 
For $j,j' \in \cJ$, $j\not= j'$ denote
\begin{equation}
\label{eq28}
 1_{j \succ j' } = \left\{
\begin{array}{ccc}
1 &\mbox{ if }& j \succ j' \\
 0 &\mbox{ if }& j' \succ j .
\end{array}
\right.
\end{equation}

\begin{enumerate}

\item
For $J,J' \subset \cJ$ define $\rho(J,J') = (-1)^{N_{J,J'}}$,
where $N_{J,J'}$ is the number of pairs $(j,j') \in J \times J' $
with $j \succ j'$. 

\item\label{piperm}
Let $K \in \bN$ and $\Seq=(j_1, \ldots, j_{K})$ be a finite sequence in $\cJ$, 
such that $k \ne l \Rightarrow j_k \ne j_l$. 
Let $\pi \in S_{K}$ denote the unique permutation chosen such that
for all $k \in \{1, \ldots, K-1\}$, $j_{\pi(k)} \prec j_{\pi(k+1)}$. 
Let $\veps_1, \ldots , \veps_{K} \in \mbox{End }\BW V^*$. 
The {\em $\Seq$--chronological product} of $\veps_1, \ldots , \veps_{K}$ is 
\begin{equation}\label{kronos}
\bT_{\Seq}
\lbrack
\veps_1, \ldots, \veps_{2n}
\rbrack
=
 \mbox{ sgn}\, (\pi) \; 
\prod\limits_{\nu =1}^{2n} \veps_{\pi(\nu)} .
\end{equation}

\item\label{JJp}
Let $J = \{j_1, \ldots , j_n\}$ , $J' = \{ j'_1, \ldots, j'_n\}$ with
$j_1 \prec \ldots \prec j_n$, $j'_1 \prec \ldots \prec j'_n$
and $J \cap J' = \emptyset$. 
Let $\veps_1, \ldots , \veps_{2n} \in \mbox{End }\BW V^*$
and $\Seq = ( j_1, \ldots, j_n, j'_1, \ldots, j'_n )$.
For this special choice we denote 
\begin{equation}\label{quartet}
\bT_{J,J'} 
\lbrack
\veps_1, \ldots, \veps_{2n}
\rbrack
=
\bT_{\Seq} 
\lbrack
\veps_1, \ldots, \veps_{2n}
\rbrack
\end{equation}
and call it the {\em $(J,J')$--chronological product} of $\veps_1, \ldots , \veps_{2n}$.
\end{enumerate}

\end{definition}

\noindent
An obvious consequence is 

\begin{lemma}\label{mischle}
Let$J$ and $J'$ be chosen as in item \ref{JJp}
and $\pi$ as in item \ref{piperm}  of Definition \ref{ordnung}.
Then 
\begin{equation}
\mbox{ sgn}\, (\pi) = \rho (J,J') .
\end{equation}
\end{lemma}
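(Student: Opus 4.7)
The plan is to compute $\mbox{sgn}(\pi)$ directly by counting inversions of the sequence $\Seq = (j_1, \ldots, j_n, j_1', \ldots, j_n')$. Recall that for any permutation $\sigma \in S_N$ the signature equals $(-1)^{I(\sigma)}$, where $I(\sigma)$ is the number of inversions of $\sigma$. Since $\pi$ is defined as the unique permutation that reorders the entries of $\Seq$ in ascending $\succ$-order, $I(\pi)$ equals the number of $\succ$-inversions of $\Seq$ itself, i.e.\ the number of index pairs $1 \le k < l \le 2n$ such that the $k$-th entry of $\Seq$ is $\succ$-greater than the $l$-th entry.

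First I would split the index pairs $(k,l)$ into three groups according to whether both indices lie in $\{1,\ldots, n\}$, both in $\{n+1,\ldots, 2n\}$, or one in each block. The first block corresponds to pairs inside $J$, which by assumption is already listed in increasing order $j_1 \prec \cdots \prec j_n$, so it contributes no inversions. Likewise the second block corresponds to pairs inside $J'$, also in increasing order, so again zero. Hence the inversion count reduces entirely to the mixed block.

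The mixed block consists of pairs $(k,l)$ with $k \in \{1,\ldots, n\}$ and $l \in \{n+1,\ldots,2n\}$, giving exactly one such pair for every $(j,j') \in J \times J'$, namely $j = \Seq_k$ and $j' = \Seq_l$. Such a pair is a $\succ$-inversion of $\Seq$ precisely when $j \succ j'$, and the set of $(j,j') \in J \times J'$ satisfying this condition has cardinality $N_{J,J'}$ by Definition \ref{ordnung}. Therefore $I(\pi) = N_{J,J'}$, which yields $\mbox{sgn}(\pi) = (-1)^{N_{J,J'}} = \rho(J,J')$, as claimed.

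There is no real obstacle here; the lemma is essentially the statement that the sign of the merge of two already-sorted lists equals the parity of the number of ``out-of-order'' cross pairs, and the only care needed is to match the inversion-counting convention with the definition of $N_{J,J'}$, which runs over $J \times J'$ (not over positions) in the prescribed order.
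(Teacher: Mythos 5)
Your proof is correct; the paper itself gives no argument for this lemma (it is introduced with ``An obvious consequence is''), and the inversion-counting argument you give is precisely the standard justification the authors evidently have in mind. The only point worth a passing remark is that the inversions of the sequence $\Seq$ are, strictly speaking, the inversions of $\pi^{-1}$ rather than of $\pi$, but since $I(\pi)=I(\pi^{-1})$ (and in any case $\mbox{sgn}(\pi)=\mbox{sgn}(\pi^{-1})$) this does not affect the conclusion.
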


\noindent
This sign is chosen in the definition (\ref{kronos}) 
of the chronological product 
because in our application the $\veps_i$ will be odd elements of the 
graded algebra End $\BW V^*$. In general, the sign involved in the chronological 
product is well--defined only if each $\veps_i$ is either even or odd, and the sign
includes only the permutations of odd elements. 

The main result of this section is the following generalization of
Lemma \ref{le33}.

\begin{satz}
\label{det.rep}
Let $(\cJ, \succ)$ be a totally ordered set 
and $J$ and $J'$ be chosen as in Definition \ref{ordnung}.
For $\alpha_1, \dots, \alpha_n \in V^*$ and $v_1,\dots, v_n \in V$
define the $(n\times n)$--matrix $M$ by 
\begin{equation}\label{Mdef}
M_{kl}
=
\alpha_k (v_l) \; 1_{j'_k \succ j_l} .
\end{equation}
Then 
\begin{equation}\label{det.gen}
\det M 
=
(-1)^{n(n-1)/2} 
\bT_{J,J'} 
\lbrack
v_1\lrcorner , \ldots , v_n \lrcorner , 
(\alpha_1 \wedge), \ldots , (\alpha_n \wedge)
\rbrack
1 .
\end{equation}
\end{satz}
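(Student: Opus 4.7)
I would prove Theorem \ref{det.rep} by induction on $n$, peeling off the operator $v_n \lrcorner$ associated to $j_n = \max J$ from the chronological product and matching the result against the Laplace expansion of $\det M$ along its $n$-th column. The base case $n = 1$ is a direct verification from Lemma \ref{CAR} and $v_1 \lrcorner \, 1 = 0$, distinguishing whether $j'_1 \succ j_1$ or not (the case $j_1 \succ j'_1$ gives zero on both sides by virtue of $v_1 \lrcorner 1 = 0$).

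For the inductive step, since $j'_1 \prec \cdots \prec j'_n$, the set $L := \{l : j'_l \succ j_n\}$ is an upper segment $\{n-r+1,\ldots,n\}$ with $r = |L|$, and the operators standing to the right of $v_n \lrcorner$ in the unsigned chronological ordering $\phi_{\pi(1)}\cdots\phi_{\pi(2n)}$ are precisely $(\alpha_{l_1}\wedge),\ldots,(\alpha_{l_r}\wedge)$ with $l_i := n-r+i$. If $r = 0$, both sides vanish: $v_n \lrcorner$ stands at the rightmost position and annihilates $1$, while the entire $n$-th column of $M$ is zero. If $r \ge 1$, iterating item 3 of Lemma \ref{CAR} yields
\begin{equation}
v_n \lrcorner (\alpha_{l_1}\wedge)\cdots(\alpha_{l_r}\wedge) = \sum_{i=1}^r (-1)^{i-1}\alpha_{l_i}(v_n)\prod_{j\neq i}(\alpha_{l_j}\wedge) + (-1)^r (\alpha_{l_1}\wedge)\cdots(\alpha_{l_r}\wedge)\,v_n\lrcorner ,
\end{equation}
the last term annihilating $1$. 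Substituting this back identifies each surviving term, up to the sign $\mbox{sgn}(\tilde\pi_i)$ of the reduced sort permutation, with $\bT_{J\setminus\{j_n\},\,J'\setminus\{j'_{l_i}\}}[\ldots]$; the induction hypothesis expresses this as $(-1)^{(n-1)(n-2)/2}\det M^{(l_i,n)}$, the $(l_i,n)$-minor of $M$. Laplace expansion along column $n$ of $M$ reads $\det M = \sum_{i=1}^r (-1)^{l_i+n}\,\alpha_{l_i}(v_n)\,\det M^{(l_i,n)}$ (using that $M_{l,n} = 0$ for $l \notin L$), so it remains to check that the signs match term by term.

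The main obstacle is precisely this sign accounting. By Lemma \ref{mischle}, $\mbox{sgn}(\pi)\,\mbox{sgn}(\tilde\pi_i) = (-1)^{N_{J,J'} - N_{\tilde J_n, \tilde J'_{l_i}}}$, and I would verify that the difference of inversion counts equals $n - r$: deleting $j_n$ eliminates exactly the $n - r$ counted pairs $(j_n, j'_l)$ with $l \notin L$, while deleting $j'_{l_i}$ eliminates none, because the chain $j_k \preceq j_n \prec j'_{l_i}$ forbids $j_k \succ j'_{l_i}$ for every $k$. Hence $\mbox{sgn}(\pi)\mbox{sgn}(\tilde\pi_i) = (-1)^{n-r} = (-1)^{l_i - i}$, independent of $i$; combined with the $(-1)^{i-1}$ from the anticommutator expansion and the $(-1)^{(n-1)(n-2)/2}$ from the induction hypothesis, this reproduces precisely the required cofactor sign $(-1)^{n(n-1)/2 + l_i + n}$, closing the induction.
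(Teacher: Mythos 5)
Your proof is correct and follows essentially the same route as the paper's: induction on $n$, using the anticommutation relations of Lemma \ref{CAR} to contract a single $v\lrcorner$ operator against the wedge operators in the sorted product, identifying the surviving terms as chronological products for the reduced index sets, and matching signs against a Laplace expansion of $\det M$. The only (immaterial) difference is that the paper peels off $v_1\lrcorner$ at the minimal element $j_1$ and anticommutes it rightward through the whole product (expanding along the first column), whereas you peel off $v_n\lrcorner$ at $j_n=\max J$ and only pass it through the trailing wedges (expanding along the last column); your sign accounting, including $\rho(J,J')\rho(I,I'_{l_i})=(-1)^{n-r}$, checks out.
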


\begin{proof}
Induction on $n$. The case $n=1$ is obvious. 
Let $n \ge 2$ and assume (\ref{det.gen}) to hold for matrices of 
size $n-1$.
By definition and by Lemma \ref{mischle}, the chronological product $\bT_{J,J'} [\ldots]$
on the right hand side of (\ref{Mdef}) is  $\rho (J,J') A_1 \ldots A_{2n}$, 
with $A_i \in \{v_1 \lrcorner, \ldots , (\alpha_n \wedge)\}$.
Suppose that  $A_1 = (\alpha_m \wedge)$ for some $m$.
Then $A_2 \dots A_{2n} 1 = 0$, so 
the right hand side of (\ref{det.gen}) vanishes. 
The indicator function in the definition of $M$ implies that
the $m^{\rm th}$ row of $M$ is zero, so that the left hand side
of (\ref{det.gen}) vanishes,  too. Thus we may assume that $A_1
\in \{ v_1 \lrcorner, \dots, v_n\lrcorner\}$. Because $J$ is ordered,
$A_1= v_1\lrcorner $.
Use
\begin{eqnarray}
A_1 A_2 \ldots A_{2n}
&=&
\sum_{k=2}^{2n} (-1)^k A_2 \ldots A_{k-1} (A_1 A_k + A_k A_1) A_{k+1} \ldots A_{2n}
\nonumber\\
&-&
A_2\ldots A_{2n} A_1 .
\end{eqnarray}
When applied to $1 \in \BW^0 V^*$, the last term vanishes because $A_1 1 = 0$. 
By Lemma \ref{CAR}, $A_1 A_k + A_k A_1= \alpha_m (v_1)$ 
if $A_k= \alpha_m \wedge$ for some $m \in \{ 1, \ldots, n\}$, and 
zero otherwise. 
The position $k$ where $\alpha_m \wedge$ appears in the product is
\begin{equation}
k=1+ |\{j \in J \cup J' : j \prec j'_m\}|
= 
1 + m-1 + |\{j \in J: j \prec j'_m\}|.
\end{equation}
Thus $(-1)^k = (-1)^m \rho (\{ j'_m\},J)$. 
Let $I = J \setminus\{1\}$ and $I'_m=J' \setminus \{j'_m\}$. 
The remaining product $A_2 \ldots A_{k-1} A_{k+1} \ldots A_{2n}$ 
times the sign factor $\rho(I,I'_m)$ 
equals the $(I,I'_m)$--chronological product, so 
\begin{eqnarray}\label{zwi}
&&
\bT_{J,J'} 
\lbrack
v_1\lrcorner \dots v_n \lrcorner 
(\alpha_1 \wedge)\dots (\alpha_n \wedge)
\rbrack 1
=
\nonumber\\
&&
\sum_{m=1}^n 
\sigma_{m}(J,J') \;
\bT_{I,I'_m} 
\lbrack
v_2 \lrcorner \ldots v_n \lrcorner 
(\alpha_1 \wedge)\ldots (\alpha_{m-1}\wedge)\;
 (\alpha_{m+1}\wedge)\ldots(\alpha_n \wedge)
\rbrack 1
\nonumber
\end{eqnarray}
with 
\begin{equation}
\sigma_{m}(J,J') 
=
\rho(J,J')\;
(-1)^m \;
\rho (\{ j'_m\},J)\;
\rho(I,I'_m) 
\end{equation}
By definition,
\begin{equation}
\rho(J,J')
=
\rho(I,I'_m) \;
\rho(J,\{j'_m\})\;
\rho (\{1\}, I'_m) \; ,
\end{equation}
$\rho (\{1\}, I'_m)=1$, and 
\begin{equation}
\rho (\{ j'_m\},J)\; \rho(J,\{j'_m\}) = (-1)^{|J|} = (-1)^n .
\end{equation}
Thus
\begin{equation}
\sigma_{m}(J,J')  = (-1)^{m+ n} .
\end{equation}
The inductive hypothesis applies to the chronological product on the right hand side
of (\ref{zwi}). Combine $(-1)^{n(n-1)/2+m+n}  = (-1)^{(n-1)(n-2)/2} (-1)^{m-1}$. 
The statement of the theorem follows by identifying the right hand side of 
(\ref{zwi}) as the Laplace expansion for the determinant.
\end{proof}

\medskip\noindent
In the remainder of this section, we prepare and give the proof of Theorem \ref{haupt}.

\begin{lemma}\label{le41}
Assume that the space $V$ is a Hilbert space with scalar product 
$\langle \cdot, \cdot \rangle_V$. In
this case we identify $V$ with its dual $V^*$ ($v \in V \mapsto
\langle v, \cdot \rangle_V \in V^*$) and consequently $\BW^k V$
with $\BW^k V^* \cong (\BW^k V)^*$ (see (\ref{eq23}) and
(\ref{eq24})).
\begin{enumerate}
\item{The scalar product $\langle \cdot, \cdot \rangle_V$ of $V$ induces, for each $k\in \bN$, through the
identification of elements of $\BW^k V$ with elements of its
dual $(\BW^k V)^*$ a norm $\norm{\cdot}$ on $\BW^k V$: $\norm{u}^2
= \langle u, u \rangle $. This norm fulfills the parallelogram
identity
\begin{equation}
\norm{u+v}^2 + \norm{u-v}^2 = 2\norm{u}^2 + 2\norm{v}^2, \quad \forall u,v \in
\BW^k V \;,
\end{equation}
hence it defines a compatible scalar product on $\BW^k V$.
Thus $\BW^k V$ and hence $\BW V$ are Hilbert spaces.}
\item{
$(u\lrcorner)^{\dag} = (u \wedge)$ and $(u \wedge)^{\dag}=u\lrcorner$, for all
$u \in V$.}
\item{
$\max\{\|u \lrcorner\|, \|(u\wedge)\| \} \le \norm{u}$, for all $u \in
V$. }
\end{enumerate}
\end{lemma}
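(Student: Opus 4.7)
The plan is to reduce all three items to a direct orthonormal--basis computation, using the duality pairing (\ref{eq23})--(\ref{eq24}) together with the Riesz isomorphism $V^*\cong V$ to transport the scalar product from $V$ to $\BW^k V$.

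For item 1, I first observe that the identification $V\cong V^*$ extends to $\BW^k V \cong \BW^k V^* \cong (\BW^k V)^*$, and unravelling (\ref{eq23}) shows that on decomposable elements the induced sesquilinear form is
\begin{equation}
\langle u_1\wedge\cdots\wedge u_k,\; v_1\wedge\cdots\wedge v_k\rangle
= \det\bigl(\langle u_i,v_j\rangle_V\bigr)_{i,j} .
\end{equation}
Pick any orthonormal basis $e_1,\dots,e_N$ of $V$. For ordered multi-indices $I=(i_1<\cdots<i_k)$ and $J=(j_1<\cdots<j_k)$, the matrix $(\langle e_{i_r},e_{j_s}\rangle_V)_{r,s}$ has determinant $\delta_{IJ}$, so the wedges $e_I=e_{i_1}\wedge\cdots\wedge e_{i_k}$ form an orthonormal basis of $\BW^k V$ for this form. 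This simultaneously establishes positivity and Hermitian symmetry, so $\langle\cdot,\cdot\rangle$ is a compatible scalar product on the finite--dimensional space $\BW^k V$, which therefore is a Hilbert space; the parallelogram identity is then automatic.

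For item 2, take $u,u_1,\dots,u_{k-1}\in V$ and $\alpha\in\BW^k V$. Definition \ref{critters}(\ref{verjung}) gives
\begin{equation}
\langle u\lrcorner\,\alpha,\; u_1\wedge\cdots\wedge u_{k-1}\rangle
=
\langle \alpha,\; u\wedge u_1\wedge\cdots\wedge u_{k-1}\rangle,
\end{equation}
which under the identification $V\cong V^*$ is precisely the adjoint relation $(u\lrcorner)^{\dag}=(u\wedge)$ for $u\lrcorner\colon \BW^k V\to\BW^{k-1} V$. Taking adjoints once more yields the symmetric assertion.

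For item 3, since passing to the adjoint preserves operator norm it suffices to bound $\norm{u\wedge}$. Assume $u\ne 0$ and choose an orthonormal basis $e_1,\dots,e_N$ of $V$ with $e_1=u/\norm{u}$. Every $\alpha\in\BW^k V$ decomposes uniquely as $\alpha=e_1\wedge\beta+\gamma$, where $\beta$ and $\gamma$ lie in the span of $\{e_I : 1\notin I\}$, and by item 1 one has $\norm{\alpha}^2=\norm{\beta}^2+\norm{\gamma}^2$. Then $u\wedge\alpha=\norm{u}\,e_1\wedge\gamma$, so $\norm{u\wedge\alpha}^2=\norm{u}^2\norm{\gamma}^2\le \norm{u}^2\norm{\alpha}^2$, giving the claim. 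The only point demanding some care is bookkeeping of conventions: (\ref{eq24}) uses the determinant normalization without $1/k!$ factors, which is exactly what makes $\{e_I\}$ orthonormal without rescaling and eliminates stray combinatorial factors in $u\wedge e_I=\pm e_{\{1\}\cup I}$. With these conventions the three items are essentially immediate, and the nontrivial content is concentrated in the recognition (for item 1) that the duality pairing, after Riesz transport, coincides with the determinantal inner product.
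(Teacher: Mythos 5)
Your proof is correct, but it takes a more computational route than the paper's, most visibly in item 3. The paper proves item 3 basis-free from the anticommutation relation of Lemma \ref{CAR}: with $\alpha=\langle u,\cdot\rangle_V$ one has $u\lrcorner(u\wedge)+(u\wedge)u\lrcorner=\norm{u}^2\,\mathrm{id}$, so for unit $w$ the quantity $\norm{(u\wedge)w}^2+\norm{u\lrcorner w}^2$ (nonnegative by item 2) equals $\norm{u}^2$, and taking the supremum bounds both operator norms at once. You instead pick an orthonormal basis with $e_1=u/\norm{u}$, split $\alpha=e_1\wedge\beta+\gamma$, and compute $\norm{u\wedge\alpha}^2=\norm{u}^2\norm{\gamma}^2$ directly, then transfer to $u\lrcorner$ via equality of adjoint norms; this is a valid and essentially equivalent argument (your orthogonal decomposition is exactly the spectral decomposition underlying the CAR identity), just basis-dependent. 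For item 1 you are actually more explicit than the paper, which only remarks that nondegeneracy follows from (\ref{eq23}) and that ``the other properties are clear,'' relying on the parallelogram identity to produce the scalar product; your orthonormal-wedge-basis computation $\langle e_I,e_J\rangle=\delta_{IJ}$ establishes positivity and Hermitian symmetry directly and makes the parallelogram step unnecessary. Item 2 is handled identically in both. The one point worth flagging is that you should state explicitly that the Riesz identification $v\mapsto\langle v,\cdot\rangle_V$ is antilinear, so that the bilinear duality pairing of (\ref{eq23}) transports to a genuinely sesquilinear form $\det(\langle u_i,v_j\rangle_V)$; you use this implicitly and it is the standard convention, so it is a presentational rather than a substantive gap.
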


\begin{proof}
1.
To see that $\norm{\cdot}$ is nondegenerate, use the defining identity
(\ref{eq23}). The other properties are clear.
Item 2 follows directly from Definition \ref{critters}.\ref{verjung}. 
To see 3, let $u \in V$ and $w \in \BW V$. Then
by Lemma \ref{CAR}
\begin{equation}
\langle w, ( u\lrcorner(u \wedge) + (u \wedge) u\lrcorner ) w
\rangle =  \norm{w}^2\norm{u}^2 .
\end{equation}
Thus 
$
\norm{u}^2 = \sup\limits_{w \in \BW V \atop \norm{w}=1} \langle w, (
u\lrcorner(u \wedge) + (u \wedge) u\lrcorner ) w \rangle \ge 
\max \{\|u\lrcorner\|^2,\| (u \wedge)\|^2\}
$.
\end{proof}

\noindent
In Definition \ref{finidi}, we required the space $V$ to be finite--dimensional, 
to avoid a discussion of subtleties in the relation between $\bigwedge V$ and its dual. 
In our applications, we can always achieve that $V$ is a finite--dimensional subspace 
of a Hilbert space or a reflexive Banach space, by taking $V$ as a space spanned by 
finitely many vectors. For Hilbert spaces, we could alternatively also 
have dropped the condition of finite dimensionality in the above. 

\begin{lemma}
\label{det.bound} Let  $\varphi, \varphi': \bN \to \cJ$ be
functions into a totally ordered set $(\cJ,
\succ)$. Let $\cal H$ be a Hilbert space. For all $n\in \bN$ and
all $v_1, \dots, v_n$, $w_1,\dots, w_n \in {\cal H}$
\begin{equation}
\label{eq44} 
\abs{
\det \Big(\langle v_k,w_l \rangle_{\cal H} \,
1_{\varphi'(k) \succ \varphi(l) } \Big)_{k,l} }
\le
\prod\limits_{k=1}^n \norm{v_k}\;\norm{w_k} .
\end{equation}
The same inequality holds with $1_{\varphi'(k) \succ \varphi(l)}$ replaced by 
$1_{\varphi'(k) \succeq \varphi(l)}$.
\end{lemma}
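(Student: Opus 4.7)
The plan is to reduce the claim to Theorem \ref{det.rep} via an order--refinement argument, and then bound the resulting chronological product of endomorphisms using the operator estimates of Lemma \ref{le41}.

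First I would take $V$ to be the finite--dimensional subspace of $\cH$ spanned by $v_1,\ldots,v_n,w_1,\ldots,w_n$; by Lemma \ref{le41} the inner product identifies $V^*$ with $V$, so setting $\alpha_k = v_k$ gives $\alpha_k(w_l) = \langle v_k,w_l\rangle_\cH$. This puts $M$ into the form appearing in Theorem \ref{det.rep}, modulo one obstacle which is the main technical point: Theorem \ref{det.rep} requires two \emph{disjoint} $n$--element subsets $J,J'$ enumerated in ascending $\prec$--order, whereas the images $\varphi(\{1,\ldots,n\})$ and $\varphi'(\{1,\ldots,n\})$ may overlap, may contain repetitions, and need not be enumerated monotonically in the indices $l,k$.

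To resolve this I would refine the order: set $\tilde\cJ = \cJ \times \{-,+\} \times \{1,\ldots,n\}$ with lexicographic order, and replace $\varphi(l)$ and $\varphi'(k)$ by $\tilde\varphi(l) = (\varphi(l),+,l)$ and $\tilde\varphi'(k) = (\varphi'(k),-,k)$. Checking the three cases ($\varphi'(k)$ greater than, equal to, or less than $\varphi(l)$) shows that $1_{\tilde\varphi'(k) \succ \tilde\varphi(l)} = 1_{\varphi'(k) \succ \varphi(l)}$, the crucial point being that ties in the first coordinate are broken by $-\prec+$ in the second, giving $\tilde\varphi'(k) \prec \tilde\varphi(l)$ (as required, since the strict indicator on the right is then $0$). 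After this refinement all $\tilde\varphi(l)$ are distinct, all $\tilde\varphi'(k)$ are distinct, and the two sets of refined values are disjoint because of the $\pm$ coordinate. Next I would choose permutations $\sigma,\tau \in S_n$ so that $k\mapsto\tilde\varphi'(\sigma(k))$ and $l\mapsto\tilde\varphi(\tau(l))$ are strictly increasing; the corresponding row/column reshuffling of $M$ changes the determinant only by a sign.

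Applying Theorem \ref{det.rep} to the permuted matrix, with $J = \{\tilde\varphi(\tau(l))\}_{l=1}^n$ and $J' = \{\tilde\varphi'(\sigma(k))\}_{k=1}^n$ and with $\alpha_k = v_{\sigma(k)}$ and the ``$v_l$''--slot filled by $w_{\tau(l)}$, expresses $\abs{\det M}$ as the absolute value of a $(J,J')$--chronological product of the $2n$ endomorphisms $w_{\tau(l)} \lrcorner$ and $(v_{\sigma(k)} \wedge)$ applied to $1 \in \BW^0 V^* = \bC$. This chronological product is, up to sign, a definite ordered product $A_1 A_2 \cdots A_{2n}$, so submultiplicativity gives $\abs{A_1 A_2 \cdots A_{2n} \, 1} \le \norm{A_1}\cdots\norm{A_{2n}}$. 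By item 3 of Lemma \ref{le41}, $\norm{w_l \lrcorner} \le \norm{w_l}$ and $\norm{(v_k \wedge)} \le \norm{v_k}$, and the bound $\prod_{k=1}^n \norm{v_k}\,\norm{w_k}$ follows. For the $\succeq$ version of the inequality, I would simply swap the roles of $+$ and $-$ in the second coordinate of the refinement, so that ties in $\cJ$ become strict $\succ$ in $\tilde\cJ$, matching $1_{\varphi'(k) \succeq \varphi(l)}$; the rest of the argument is unchanged.
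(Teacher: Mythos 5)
Your proposal is correct and follows essentially the same route as the paper: refine the total order lexicographically (the paper uses $\cJ \times \{0,1\} \times \{1,\ldots,m\}$ where you use $\cJ \times \{-,+\} \times \{1,\ldots,n\}$, a cosmetic difference) to break ties and make the images of $\varphi$ and $\varphi'$ disjoint, permute rows and columns to sort them, apply Theorem \ref{det.rep}, and bound the resulting chronological product by the operator norms from Lemma \ref{le41}. The tie-breaking direction you choose for the $\succ$ versus $\succeq$ cases matches the paper's exactly.
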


\begin{proof}
For $n \ge 1$ let $\bN_n = \{1, \ldots , n\}$. Define 
\begin{equation}
\cG_n 
=
\{ j \in \cJ : \exists k,l \in \bN_n: \varphi'(k) = \varphi (l) = j \} .
\end{equation}
Obviously, $|\cG_n| \le n$. Let 
\begin{equation}
m 
=
\max\limits_{j \in \cJ} \{ \abs{(\varphi')^{-1} (\{j\}) \cap \bN_n}, 
\abs{\varphi^{-1}(\{ j\}) \cap \bN_n} \} 
\end{equation}
and set $\tilde \cJ_n = \cJ \times \{ 0,1\} \times \{1,\ldots, m\}$. Extend
the ordering lexicographically, i.e.\ $(j,\mu,\nu) \succ (j',\mu',\nu')$
$\Leftrightarrow$ $j \succ j'$ or [$j = j'$ and $\mu \succ \mu'$] or
[$j=j'$ and $\mu=\mu'$ and $\nu > \nu'$]. Then $(\tilde \cJ_n, \succ)$
is totally ordered. For $j \in \cG_n$, there are $r \le m$ and
$k_1, \ldots, k_r \in \bN_n$ such that  for all $\rho \le r$, $\varphi'(k_\rho) = j$,
and there are $s \le m$, 
$l_1, \ldots, l_s \in \bN_n$ such that  for all $\sigma \le s$, $\varphi(l_\sigma) = j$.
We now extend $\varphi$ to $\tilde \varphi$ and $\varphi'$ to $\tilde\varphi'$
as follows. 

\medskip\noindent
{\em Case of the matrix with $1_{\varphi'(k) \succ \varphi (l)}$. }
In this case, $1_{\varphi'(k) \succ \varphi (l)} = 0 $ if $\varphi'(k) = \varphi (l)$. 
To obtain $1_{\tilde\varphi'(k) \succ \tilde\varphi (l)} = 0 $, 
we make $\tilde\varphi'(k) $ smaller by setting
$\tilde\varphi'(k_\rho) = (\varphi'(k_\rho),0,\rho)$ and
$\tilde\varphi (l_\sigma) = (\varphi(l_\sigma), 1, \sigma)$. 

\medskip\noindent
{\em Case of the matrix with $1_{\varphi'(k) \succeq \varphi (l)}$. }
In this case, $1_{\varphi'(k) \succeq \varphi (l)} = 1 $ if $\varphi'(k) = \varphi (l)$. 
To obtain $1_{\tilde\varphi'(k) \succ \tilde\varphi (l)} = 1 $, 
we make $\tilde\varphi'(k) $ bigger by setting
$\tilde\varphi'(k_\rho) = (\varphi'(k_\rho),1,\rho)$ and
$\tilde\varphi (l_\sigma) = (\varphi(l_\sigma), 0, \sigma)$. 

\medskip\noindent
For $j \in \cJ \setminus \cG_n$, $j=\varphi '(k)$, we set 
$\tilde \varphi ' (k) = (\varphi'(k), 0,\rho)$ etc. By definition of the 
lexicographical ordering on $\tilde \cJ$, it does not matter which 
convention one chooses on $\cJ \setminus \cG_n$. 

By construction, $\tilde\varphi'(\bN_n) = J'$ 
and $\tilde\varphi (\bN_n)= J$ are disjoint, and $|J| =|J'|=n$. 
We may permute the rows and columns of the matrix such that 
$\tilde\varphi(m_1) \prec \tilde\varphi(m_2)$ if $m_1 < m_2$ 
and similarly for $\tilde\varphi'$. This does not change the 
absolute value of the determinant. We can now apply Theorem \ref{det.rep}, 
to represent the determinant as a chronological product. 
The norm estimate in Lemma \ref{le41} implies the statement. 
\end{proof}

\begin{definition}
Let $n \in \bN$ and $A$ be a complex $(n \times n)$--matrix. 
We say that  $\Pi(A,\gamma)$ holds iff
for all $p \in \{ 1, \ldots n\}$ and all sequences
$a_1 < \ldots < a_p$ and $b_1 < \ldots < b_p$ in $\{1, \ldots, n\}$, 
\begin{equation}
\sup_{v_1, \ldots, v_p, w_1, \ldots, w_p \in \ball}
\abs{
\det
\left(
\langle
v_q \, , \, w_r
\rangle\; 
A_{a_q, b_r}
\right)_{1 \le q,r\le p}
}
\le
{\gamma}^{2p} .
\end{equation}

\end{definition}

\begin{lemma}\label{subdetsum}  
Let $n$ and $k \in \bN$ and $A^{(1)} , \ldots, A^{(k)}$ be 
complex $(n \times n)$--matrices. Assume that for all 
$l \in \{1, \ldots, k\}$ there are $\gamma_l > 0$ such that
the property $\Pi(A^{(l)},\gamma_l)$ holds. 
Then $\Pi \left(A^{(1)}+\ldots + A^{(k)}, \; \gamma_1 + \ldots +\gamma_k \right)$ holds. 
\end{lemma}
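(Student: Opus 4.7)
Fix $p$, increasing index sequences $a_1<\cdots<a_p$ and $b_1<\cdots<b_p$, and vectors $v_q, w_r \in \ball$, and set $B^{(l)}_{qr} = \langle v_q,w_r\rangle A^{(l)}_{a_q,b_r}$, so that the matrix whose determinant we must bound is $B = \sum_l B^{(l)}$. The plan is to reduce $\det B$ to subdeterminants of the individual $B^{(l)}$, each of which is directly controlled by $\Pi(A^{(l)},\gamma_l)$, and then to check that the resulting combinatorial prefactors are dominated by the multinomial expansion of $(\gamma_1+\cdots+\gamma_k)^{2p}$.

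First I would use column multilinearity of $\det$ to write $\det B = \sum_{\tau} \det N^\tau$, where $\tau$ ranges over maps $\{1,\ldots,p\}\to\{1,\ldots,k\}$ and column $r$ of $N^\tau$ equals column $r$ of $B^{(\tau(r))}$. For each $\tau$, set $C_l = \tau^{-1}(l)$ and $p_l = |C_l|$. The iterated Laplace expansion of $\det N^\tau$ along the column partition $(C_1,\ldots,C_k)$ gives, after taking absolute values,
\[
\abs{\det N^\tau} \le \sum_{R_1\sqcup\cdots\sqcup R_k = \{1,\ldots,p\},\ |R_l|=p_l}\ \prod_{l=1}^k \abs{\det(B^{(l)}_{qr})_{q\in R_l,\,r\in C_l}}.
\]
After reindexing the sets $R_l$ and $C_l$ in increasing order (which leaves the absolute value of each determinant invariant), each factor on the right is of the form controlled by $\Pi(A^{(l)},\gamma_l)$ and is therefore at most $\gamma_l^{2p_l}$. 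Since the number of ordered partitions $(R_l)$ with $|R_l|=p_l$ is the multinomial coefficient ${p \choose p_1,\ldots,p_k}$, and the number of maps $\tau$ with composition $(p_l)$ is again ${p \choose p_1,\ldots,p_k}$, summing yields
\[
\abs{\det B} \le \sum_{p_1+\cdots+p_k=p} {p \choose p_1,\ldots,p_k}^2 \prod_{l=1}^k \gamma_l^{2p_l}.
\]

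The remaining step is a purely combinatorial comparison. Because $(\gamma_1+\cdots+\gamma_k)^{2p} = \sum_{q_1+\cdots+q_k=2p} {2p \choose q_1,\ldots,q_k}\prod_l\gamma_l^{q_l}$ has all nonnegative summands, it is enough to show ${p \choose p_1,\ldots,p_k}^2 \le {2p \choose 2p_1,\ldots,2p_k}$, so that the terms with $q_l=2p_l$ already dominate the previous sum. Writing out factorials reduces this to ${2p \choose p} \ge \prod_l {2p_l \choose p_l}$, which is a single nonnegative summand of the iterated Vandermonde convolution ${2p \choose p} = \sum_{j_1+\cdots+j_k=p}\prod_l{2p_l \choose j_l}$. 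The main obstacle in the plan is really just the sign and index-ordering bookkeeping inside the Laplace expansion, but these drop out upon taking absolute values, leaving the elementary combinatorial inequality above as the sole quantitative input.
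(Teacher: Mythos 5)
Your proof is correct and takes essentially the same route as the paper: the paper proceeds by induction on $k$, splitting off $A^{(1)}$ at each step via the two-block generalized Laplace expansion and using ${p \choose s}^2 \le {2p \choose 2s}$, which is exactly the $k=2$ instance of your one-shot multilinearity-plus-iterated-Laplace expansion and of your multinomial inequality. Both arguments rest on the same two ingredients --- applying $\Pi(A^{(l)},\gamma_l)$ to the subdeterminants produced by the Laplace expansion (which is precisely why $\Pi$ is formulated for all submatrices) and absorbing the squared combinatorial multiplicities into the binomial/multinomial theorem --- so the two proofs are interchangeable.
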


\begin{proof}
Induction on $k$. For $k=1$, the statement is obvious.
In the induction step, let 
$k \ge 2$, and assume $\Pi \left(A^{(2)}+\ldots + A^{(k)},
\gamma_2 + \ldots +\gamma_k \right)$ 
to hold. 
Let $p \in \{ 1, \ldots n\}$,
$a_1 < \ldots < a_p$, 
and $b_1 < \ldots < b_p$ in $\{1, \ldots, n\}$,
and $v_1, \ldots, v_p, w_1, \ldots, w_p \in \ball$.
Let $B$ and $C$ be the matrices with elements
$
B_{q,r} = 
\langle
v_q \, , \, w_r
\rangle\; 
A_{a_q, b_r}^{(1)} 
$
and 
$
C_{q,r}
=
\langle
v_q \, , \, w_r
\rangle\; 
\sum_{i=2}^k A_{a_q, b_r}^{(i)} 
$
Also, set  
$\gamma'_1 = \sum_{l=2}^k \gamma_l$.
Then by the generalized Laplace expansion for determinants
\begin{eqnarray}
\det(B+C)
=
\sum_{S,T \subset \{1, \ldots , p\} \atop |S|=|T|}
\veps_p(S,T) \; 
\det B_{S,T} \; 
\det C_{S^c,T^c} 
\end{eqnarray}
where $S^c = \{1, \ldots, p \} \setminus S$ and 
$\veps_p(S,T) \in \{ -1,1\}$, and the subscripts 
denote the submatrices of $B$ and $C$ defined by the 
sets.  
Let $s=|S|=|T|$. 
By hypothesis of the Lemma, for all $S$, $T$
\begin{equation}
\abs{\det B_{S,T}} \le \gamma_1^{2s}
\end{equation}
and by the inductive hypothesis, 
\begin{equation}
\abs{\det C_{S^c,T^c}} \le 
{\gamma'_1}^{2(p-s)} .
\end{equation}
Thus, using ${p \choose s}^2 \le {2p \choose 2s}$, 
\begin{equation}
\abs{\det (B+C)} 
\le
\sum_{s=0}^p
{p \choose s}^2
{\gamma_1}^{2s} \;
{\gamma'_1}^{2(p-s)}
\le
\left(\sum_{l=1}^k \gamma_l\right)^{2p}
\end{equation}
\end{proof}

\medskip\noindent
{\it Proof of Theorem \ref{haupt}. }
Call the $n \times n$ submatrices of the summands in (\ref{Mhaupt}) $M_l$.
By Lemma \ref{subdetsum}, it suffices to show that for all 
$l \in \{0, \ldots, k+K\}$,  $\Pi(M_l,\gamma_l)$ holds.
The matrix $C_l$ has a Gram representation $(\cH,g,h)$
with Gram constant $\gamma_l$. Then 
\begin{equation}
\langle v \, , \, w \rangle_{\bC^n} \;
(C_l)_{xy} 
=
\langle v \otimes g_x \, , \, w \otimes h_y \rangle_{\bC^n \otimes \cH}
\end{equation}
and , if $\norm{v} \le 1$, 
$\norm{v \otimes g_x} = \norm{v} \, \norm{g_x} \le \gamma_l$,
similarly for $w \otimes h_y$. $M_l$ is obtained (for $l >0$) 
by multiplying this with an indicator function. 
Every submatrix of $M_l$ is of the same form as $M_l$
and satisfies the hypotheses of Lemma \ref{det.bound}. 
Thus $\Pi(M_l,\gamma_l)$ holds.
\hfill\hbox{\vrule width 7pt depth 0pt height 7pt}
\par\vspace{10pt}

\medskip\noindent
That all submatrices are involved in property $\Pi$,
as necessary for the inductive argument in the proof of Lemma \ref{subdetsum},
is the reason for taking the supremum over the larger set
in Definition \ref{det.b}, instead of taking a supremum over
$P \in \cP_{n,1}$. Submatrices of a $P\in \cP_{n,1}$ are 
in general not positive.
By contrast, the property of having a Gram representation 
on $\bC^n$ with Gram constant 1 is stable under taking 
submatrices.

\section{Convergent expansions without UV cutoffs}
\label{sec:noUVcutoff}

In this section we apply the results of Section \ref{sec:det}  
to the many--fermion covariances introduced in Section
\ref{sec:fermUV}. 
We give explicit determinant and decay bounds, 
and prove Theorem  \ref{nice}.
Moreover, we show that,  
for a multiscale expansions with the standard Fermi surface cutoff
functions and sectorization, 
our results yield  all standard power counting bounds 
without requiring a cutoff on the Matsubara frequencies, 
so that the analytic structure as a function of the frequencies 
can be preserved in such a multiscale analysis. 

\subsection{Determinant bound}
In the following, we apply Theorem \ref{haupt}
to the covariance (\ref{eq:Cscttpxxp}), of which (\ref{eq:Cttpxxp})
is the special case $\scalefunc=1$.
Before stating the details of the representation 
we briefly motivate it. 
By definition, 
\begin{eqnarray}\label{eq:oioi}
\CC (\tau, E) 
&=&
-
1_{\tau > 0}\;
\E^{-\tau E} \; f_\beta ( - E)
+
1_{\tau \le 0} \;
\E^{- \tau E} \; f_\beta ( E)
\end{eqnarray}
Let $\veps > 0$ and 
\begin{equation}\label{eq:Phi}
\Phi (s,\veps) 
=
\frac{1}{\sqrt{\pi}} \;
\frac{\sqrt{\veps \, f_\beta(-\veps)}}{\I s - \veps} .
\end{equation}
Then, since $\veps > 0$, $s \mapsto \Phi(s,\veps) \in L^2 (\bR)$, 
$\norm{\Phi (\cdot, \veps)}_2 \le 1$, and 
\begin{equation}\label{eq:meiomei}
\forall \tau \ge 0, \veps > 0: \quad
\E^{-\veps \tau} \; f_\beta (-\veps) 
=
\int_\bR \dd s\; \E^{\I s \tau} \; \abs{\Phi(s,\veps)}^2 \; .
\end{equation}
Thus, if $\tau = t-t' > 0$, $\E^{-\veps \tau} f_\beta (-\veps) = \langle v_t, v_{t'} \rangle$
with $v_t (s) = \E^{-\I s t} \Phi (s,\veps)$. 
To use this for $\CC$ we need to respect the signs in (\ref{eq:oioi}), 
hence rewrite, for $\tau \in [-\beta,\beta]$
\begin{eqnarray}\label{plussi}
\CC (\tau, E)
&=&
\cases{
-
\E^{-\tau E} \; f_\beta (-E) & if $\tau > 0$ and $E > 0$ \cr
-
\E^{(\beta-\tau) E} \; f_\beta (E) & if $\tau > 0$ and $E < 0$ \cr
\E^{-(\beta+\tau) E} \; f_\beta (-E) & if $\tau \le 0$ and $E > 0$ \cr
\E^{-\tau E} \; f_\beta (E) & if $\tau \le 0$ and $E < 0$ 
}
\end{eqnarray}
using $f_\beta (-E) = \E^{\beta E} f_\beta (E)$. 
By Tonelli's theorem and an obvious decomposition of the remaining factors
in the integrand, we can represent  $C_{(t,x),(t',x')}$
by integration over $\Sp{p}$. Note that the $v_t$ defined above vanishes
at $E=0$, but that $\CC(\tau,0) =  \frac12 - 1_{\tau >0} \ne 0$, 
so it is necessary to restrict to functions $E(\Sp{p})$ whose zero level set 
has measure zero.

\begin{lemma}\label{le:5.4}
Let $E: \cB \to \bR $ be measurable 
and assume that 
\begin{equation}\label{measure0}
\mumoma\left( \{\Sp{p} \in \cB: E(\Sp{p}) = 0\}\right) = 0 \; .
\end{equation}
Let $\scalefunc \in L^1(\cB,\mumoma)$ with $\scalefunc(\Sp{p}) \ge 0$ for all $\Sp{p} \in \cB$.
For $x = (t,\Sp{x}) \in \bX_d$ and $\sigma \in \{ -1,1\}$ define
\begin{eqnarray}
g^{\sigma}_{x} (s,\Sp{p}) 
&=&
\E^{-\I \Sp{p} \cdot \Sp{x} - \I s t} \; 
\Phi \left(s, |E(\Sp{p})|\right) \;  
\sqrt{\scalefunc(\Sp{p})}
1_{\sigma E(\Sp{p}) > 0}
\nonumber\\
h_{x} (s,\Sp{p}) 
&=&
\E^{-\I \Sp{p} \cdot \Sp{x} + \I s t} \; 
\Phi \left(s, |E(\Sp{p})|\right) \;
\sqrt{\scalefunc(\Sp{p})}
1_{E(\Sp{p}) < 0} \; .
\end{eqnarray}
Then for all $x \in \bX_d$,  $g^+_{x}$, $g^-_x$ and $h_x$ are 
in $\cH = L^2( \bR \times \cB, \dd s \otimes \dd \mumoma)$, 
with norms bounded by $\norm{\scalefunc}_1^{1/2}$,
and the covariance (\ref{eq:Cscttpxxp}) has the representation
\begin{eqnarray}\label{eq:goodrep}
C_{(t,x),(t',x')}^{(\scalefunc)}
&=&
1_{t > t'}\;
\langle
- g^+_{t,\Sp{x}} - g^-_{\beta-t,\Sp{x}}\, , \; g^+_{t',\Sp{x}'}+ h_{t',\Sp{x}'}
\rangle
\nonumber\\
&+&
1_{t \le t'} \;
\langle
g^+_{t,\Sp{x}} + h_{t,\Sp{x}} \, , \; g^+_{t'-\beta,\Sp{x}'}+h_{t',\Sp{x}'}
\rangle \; .
\end{eqnarray}
\end{lemma}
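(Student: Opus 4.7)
The plan is to verify (\ref{eq:goodrep}) by direct computation: expand each inner product on the right--hand side, drop those pairings whose $E(\Sp{p})$--sign indicators are incompatible, and evaluate the surviving $s$--integrals by the Fourier--type identity (\ref{eq:meiomei}). The specific shifts $\beta - t$ and $t' - \beta$ appearing in the two cases are designed precisely so that the effective $\tau$--arguments of that identity are always nonnegative, which is exactly the range in which (\ref{eq:meiomei}) applies.

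The norm bound is straightforward. By Fubini,
\[
\norm{g^\sigma_x}^2 = \int \mumoma(\dd\Sp{p})\, \scalefunc(\Sp{p})\, 1_{\sigma E(\Sp{p}) > 0} \int_\bR \dd s\, |\Phi(s, |E(\Sp{p})|)|^2 ,
\]
and a residue computation (or the estimate noted right after (\ref{eq:Phi})) gives $\int \dd s\, |\Phi(s,\veps)|^2 = f_\beta(-\veps) \le 1$ for $\veps > 0$. Under hypothesis (\ref{measure0}) we have $|E(\Sp{p})| > 0$ for $\mumoma$--almost every $\Sp{p}$, so the integrand is defined a.e.\ and $\norm{g^\sigma_x} \le \norm{\scalefunc}_1^{1/2}$. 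The same argument bounds $\norm{h_x}$.

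For the representation itself, I would compute the cross pairings one by one. With $\langle f, g \rangle = \int \bar f g$, each pairing reduces to an expression of the form $\int \mumoma(\dd\Sp{p})\, \E^{\I \Sp{p} \cdot (\Sp{x} - \Sp{x}')} \scalefunc(\Sp{p})\, \Theta \int \dd s\, \E^{\I s \tau} |\Phi(s, |E(\Sp{p})|)|^2$, where $\Theta$ is a product of two sign indicators on $E(\Sp{p})$ and $\tau$ is a linear combination of $t, t', \beta$. Pairings with contradictory indicators ($E>0$ against $E<0$) vanish outright; each surviving pairing produces, by (\ref{eq:meiomei}), a factor $\E^{-\tau |E(\Sp{p})|} f_\beta(-|E(\Sp{p})|)$, and this is legitimate precisely because $\tau \ge 0$ in each case: for $t > t'$, $\langle g^+_t, g^+_{t'} \rangle$ gives $\tau = t - t'$ and $\langle g^-_{\beta - t}, h_{t'} \rangle$ gives $\tau = \beta - (t - t')$; for $t \le t'$, $\langle g^+_t, g^+_{t' - \beta} \rangle$ gives $\tau = \beta - (t' - t)$ and $\langle h_t, h_{t'} \rangle$ gives $\tau = t' - t$.

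The main obstacle is purely bookkeeping: one must match each of the resulting terms, together with the overall signs in front of the $g^\pm$ in (\ref{eq:goodrep}) and with the identity $f_\beta(-E) = \E^{\beta E} f_\beta(E)$ used on the $E < 0$ components, against the four--line case distinction (\ref{plussi}) for $\CC(t - t', E(\Sp{p}))$. A case--by--case check then identifies the right--hand side of (\ref{eq:goodrep}) with $\int \mumoma(\dd\Sp{p})\, \E^{\I \Sp{p} \cdot (\Sp{x}-\Sp{x}')} \scalefunc(\Sp{p})\, \CC(t-t', E(\Sp{p})) = C^{(\scalefunc)}_{(t,\Sp{x}),(t',\Sp{x}')}$, completing the proof.
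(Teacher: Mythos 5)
Your proposal is correct and follows essentially the same route as the paper's proof: discard the measure-zero set $\{E(\Sp{p})=0\}$, use Tonelli/Fubini to justify the iterated $s$- and $\Sp{p}$-integrations, kill the pairings with incompatible sign indicators, and evaluate the surviving ones via (\ref{eq:meiomei}) with the shifts $\beta-t$ and $t'-\beta$ ensuring nonnegative effective time arguments, matching the result against (\ref{plussi}). The paper states this more tersely (decomposing $\cB$ into $\cB_\pm$ and citing (\ref{eq:meiomei}) and (\ref{plussi})), but the argument is the same.
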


\begin{proof}
The integrand in (\ref{eq:Cscttpxxp}) is bounded, so we can remove the 
set of measure zero $\{ \Sp{p} \in \cB : E(\Sp{p}) =0\}$
from the integral. 
On its complement, the Gram representation given in the lemma
converges absolutely as an iterated integral first over $s$, then over $\Sp{p}$, 
hence by Tonelli's theorem in any order of integration,
and the $L^2$--norms are finite by the same argument.
The bound for the norms is obvious from the properties of 
$\Phi$.  By the support properties of the functions,
\begin{equation}
\langle
- g^+_{t,\Sp{x}} - g^-_{\beta-t,\Sp{x}}\, , \; g^+_{t',\Sp{x}'}+ h_{t',\Sp{x}'}
\rangle
=
\langle
- g^+_{t,\Sp{x}} \, , \; g^+_{t',\Sp{x}'}
\rangle
+
\langle
- g^-_{\beta-t,\Sp{x}} \, , \; h_{t',\Sp{x}'}
\rangle
\end{equation}
and
\begin{equation}
\langle
g^+_{t,\Sp{x}} + h_{t,\Sp{x}} \, , \; g^+_{t'-\beta,\Sp{x}'}+h_{t',\Sp{x}'}
\rangle 
=
\langle
g^+_{t,\Sp{x}} \, , \; g^+_{t'-\beta,\Sp{x}'}
\rangle
+
\langle
h_{t,\Sp{x}} \, , \; h_{t',\Sp{x}'}
\rangle
\end{equation}
Decomposing the integration domain into 
$\cB_\pm
=
\{ \Sp{p} \in \cB \; : \;
\pm E(\Sp{p}) > 0 \}$, 
(\ref{eq:goodrep}) follows from (\ref{eq:meiomei}) and (\ref{plussi}).
\end{proof}

\noindent
The condition that $\scalefunc \ge 0$ in Lemma \ref{le:5.4} was just for 
convenience in stating the result in a simple form. With an obvious 
generalization, replacing $\sqrt{\scalefunc(\Sp{p})}$ by 
$\scalefunc(\Sp{p})\; \abs{\scalefunc(\Sp{p})}^{-1/2}$,
and defining a few more functions $\tilde g$ to take care of the necessary complex
conjugations, a representation with the same properties as  
(\ref{eq:goodrep}) can be obtained for general $\scalefunc \in L^1 (\cB, \mumoma)$. 
In the applications below, $\scalefunc$ will be a scaling function, hence nonnegative. 

\begin{koro}\label{opt2}
Under the hypotheses of Lemma \ref{le:5.4}, 
the many--fermion covariance (\ref{eq:Cscttpxxp}) 
has a \DB\ $\dB_{C^{(\scalefunc)}}$ with
\begin{equation}
\frac{1}{\sqrt{2}} \norm{\scalefunc}_1^{1/2} \le \dB_{C^{(\scalefunc)}} \le 2 \norm{\scalefunc}_1^{1/2}
\end{equation}
(for $\scalefunc = 1$, corresponding to the covariance (\ref{eq:Cttpxxp}), 
$\norm{\scalefunc}_1 = \mumoma (\cB) = a^{-d}$).
\end{koro}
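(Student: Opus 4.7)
The plan is to derive both bounds from the representation (\ref{eq:goodrep}) of Lemma \ref{le:5.4}, combined with Theorem \ref{haupt} for the upper bound and with the optimality remark (\ref{opt}) for the lower bound.

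For the upper bound, I would write $C^{(\scalefunc)}$ as a sum of two blocks, $C_1 \cdot 1_{t>t'} + C_2 \cdot 1_{t\le t'}$, read off from (\ref{eq:goodrep}); each $C_i$ is a sum of two inner products only, since the cross terms $\langle g^+_\cdot,h_\cdot\rangle$ and $\langle g^-_\cdot,g^+_\cdot\rangle$ vanish because $g^+$ is supported in $\cB_+ = \{E>0\}$ while $g^-,h$ sit in $\cB_- = \{E<0\}$. To obtain the prefactor $2$ rather than $2\sqrt 2$, I would pack the two inner products inside each block into a single Gram representation on $\cH\oplus\cH$; concretely, for the $1_{t>t'}$ block take $V_{t,\Sp x} = (-g^+_{t,\Sp x}) \oplus (-g^-_{\beta-t,\Sp x})$ and $W_{t',\Sp x'} = g^+_{t',\Sp x'} \oplus h_{t',\Sp x'}$, and analogously for the second block. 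By the Pythagorean identity and the norm bounds of Lemma \ref{le:5.4}, $\norm{V_{t,\Sp x}}^2 \le \int_{\cB_+}\scalefunc\,\mumoma(\dd\Sp p) + \int_{\cB_-}\scalefunc\,\mumoma(\dd\Sp p) = \norm{\scalefunc}_1$, and similarly for the other three vectors. Applying Theorem \ref{haupt} with $C_0=0$, $k=K=1$, $\cJ=\bR$, $\varphi'_1(x)=\varphi_1(x)=t(x)$ for the strict ordering, and $\varphi'_2(x)=\varphi_2(x)=-t(x)$ to encode $1_{t\le t'}$ as a non-strict ordering, then yields $\dB_{C^{(\scalefunc)}}\le\gamma_1+\gamma_2\le 2\norm{\scalefunc}_1^{1/2}$.

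For the lower bound, I would invoke (\ref{opt}) with $n=1$, which gives $\dB_{C^{(\scalefunc)}} \ge \bigl(\sup_{(t,\Sp x),(t',\Sp x')}\abs{C^{(\scalefunc)}_{(t,\Sp x),(t',\Sp x')}}\bigr)^{1/2}$. Taking $\Sp x=\Sp x'$ reduces the integrand in (\ref{eq:Cscttpxxp}) to $\scalefunc(\Sp p)\CC(t-t',E(\Sp p))$. Evaluating at $t=t'$ gives $\int\scalefunc f_\beta(E)\,\mumoma(\dd\Sp p)$, while taking $t-t'\to 0^+$ yields, by dominated convergence (using $\abs{\CC}\le 1$), the value $-\int\scalefunc(1-f_\beta(E))\,\mumoma(\dd\Sp p)$. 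The elementary inequality $\max(a,b)\ge\tfrac12(a+b)$ for $a,b\ge 0$, together with $f_\beta+(1-f_\beta)=1$, then gives $\sup\abs{C^{(\scalefunc)}}\ge\tfrac12\norm{\scalefunc}_1$, hence $\dB_{C^{(\scalefunc)}}\ge(1/\sqrt 2)\norm{\scalefunc}_1^{1/2}$.

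The main subtlety is the grouping step in the upper bound. Feeding the four inner products of (\ref{eq:goodrep}) separately into Theorem \ref{haupt} and using Cauchy--Schwarz on the split $\norm{\scalefunc}_1 = \int_{\cB_+}\scalefunc\,\mumoma(\dd\Sp p) + \int_{\cB_-}\scalefunc\,\mumoma(\dd\Sp p)$ produces only the weaker prefactor $2\sqrt 2$. The sharper constant $2$ becomes available precisely because the momentum-space supports of $g^\pm$ and $h$ are disjoint, so the direct-sum packing introduces no cross terms in the squared norms and the $\cB_+$, $\cB_-$ contributions recombine additively by the Pythagorean identity.
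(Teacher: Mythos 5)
Your proposal is correct and follows essentially the same route as the paper: the upper bound by feeding the representation of Lemma \ref{le:5.4} into Theorem \ref{haupt} with the two time-orderings encoded via $\varphi_1=t$, $\varphi_2=-t$, and the lower bound via (\ref{opt}) by comparing $C^{(\scalefunc)}$ at $t=t'$ with its limit as $t'\uparrow t$ at coinciding spatial points, which produces $\max\{\rho_+,\rho_-\}\ge\frac12\norm{\scalefunc}_1$. The only difference is that you spell out explicitly why each block's Gram constant is $\norm{\scalefunc}_1^{1/2}$ rather than $\sqrt2\,\norm{\scalefunc}_1^{1/2}$ (the disjoint momentum-space supports of $g^\pm$ and $h$ making the squared norms add to $\norm{\scalefunc}_1$), a point the paper leaves implicit.
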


\begin{proof}
The indicator functions in the times $t$ and $t'$ correspond to the 
choices $(\cJ,\succ) = ([-\beta,\beta], >)$,  
$\varphi_1 (t,\Sp{x}) = \varphi'_1 (t,\Sp{x}) = t $ 
and $\varphi_2 (t,\Sp{x}) = \varphi'_2 (t,\Sp{x}) = -t$. 
The upper bound follows from the explicit representation given in  
Lemma \ref{le:5.4} by applying Theorem \ref{haupt}. 
Let 
\begin{equation}
\rho_\pm
=
\int_\cB \mumoma(\dd \Sp{p}) \; 
f_\beta (\pm E(\Sp{p})) \;
\scalefunc(\Sp{p}) 
\end{equation}
then $\rho_- = \norm{\scalefunc}_1 - \rho_+$.
Set $\Sp{x} = \Sp{x}'$. Then considering the cases
$t=t'$ and $t' \uparrow t$ gives
\begin{equation}
\sup_{x,x' \in \bX_d} \abs{C^{(\scalefunc)}_{xx'}} 
\ge
\max \{ \rho_+,\rho_- \} 
\ge \frac12 \norm{\scalefunc}_1 .
\end{equation}
The lower bound for $\dB_{C^{(\scalefunc)}}$ now follows from (\ref{opt}).
\end{proof}

\medskip\noindent
{\it Proof of Theorem \ref{nice}. } 
To apply  Lemma \ref{le:5.4}, we need to satisfy the zero measure condition.
For $\veps > 0$, define $E_\veps : \cB \to \bR$ by $E_\veps (\Sp{p}) = \veps/2$
if $\abs{E(\Sp{p})} \le \veps/2$ and  $E_\veps (\Sp{p}) = E(\Sp{p}) $ otherwise. 
Obviously,  $\norm{E - E_\veps}_\infty \le \veps$, 
and $ \{\Sp{p} \in \cB: E_\veps (\Sp{p}) = 0\} =\emptyset$. 
Because $\beta < \infty$, the covariance $C^{(\scalefunc)}$ is a continuous 
function of $E$ in $\norm{\cdot}_\infty$, so $C^{(\scalefunc)}$ is the limit
$\veps \to 0$ of the covariance $C^{(\scalefunc,\veps)}$ given by $E_\veps$.
By construction,  $E_\veps$ satisfies the conditions of Lemma \ref{le:5.4}
so Corollary  \ref{opt2} implies the bound  (\ref{eq:nicedB}) for $C^{(\scalefunc,\veps)}$.
That bound is uniform in $\veps$.
\hfill\hbox{\vrule width 7pt depth 0pt height 7pt}
\par\vspace{10pt}

\medskip\noindent
The representation of $C^{(\scalefunc)}$ given in Lemma \ref{le:5.4}
generalizes one found in \cite{FKT04}, 
where determinants of matrices of the form
$$
M_{kl}= \langle v_k, w_l\rangle \left\{
\begin{array}{ccc}
0 &,& t_k-t_l \le 0 \\
e^{-(t_k -t_l )} &,& t_k -t_l > 0
\end{array}
 \right.,
$$
for vectors $v_k, w_l$ in a Hilbert space $\cal H$ and
real numbers $t_k, t_l$, were considered. 
The result of \cite{FKT04} corresponds to the special case of the function 
\begin{equation}
\tilde \CC (\tau ) 
=
-
\E^{-\tau } 1_{\tau > 0},
\end{equation}
which is the limit $\beta \to \infty$ of (\ref{eq:oioi}) at $E=1$. 
Thus our method applies to that case, with $\tilde \Phi (s) = (\I s -1)^{-1}$.

\subsection{Decay constant}
Under very mild conditions on $E$, the \DB s we have proven
are uniform in $\beta$ (see Corollary \ref{opt2}). 
One must of course not jump to the conclusion that this implies 
convergence of perturbation series uniformly in the temperature 
because a finite \DB\ is only one condition for convergence of the 
perturbation expansion. The second is the finiteness of the 
decay constants
\begin{equation}
\abf_C^{(k_0,k)}
=
\int_{-\beta}^{\beta} \dd \tau \int_{\Xspace}\dd \Sp{x} \;
\abs{C(\tau,\Sp{x})} \;
\abs{\tau}^{k_0} \abs{x}^k
\end{equation} 
for $k_0 \ge 0$ and $k \ge 0$. 
In this paper, we only discuss the case
$k_0 = k=0$, and denote $\abf^{(0,0)}_C = \abf_C$ 
because the simplest convergence theorem requires only this 
data, and because the generalization is straightforward. 
For our many--fermion covariance, the existence
of a nonempty Fermi surface that is not degenerated to a point 
implies that the decay constant grows polynomially in $\beta$
and diverges in the zero--temperature limit.  
Only for special situations, such as a model for an insulator, 
for which $|E(\Sp{p})| \ge \Emin  > 0$, 
the decay constant is uniform in $\beta$. 

For simplicity we assume here the case of a continuous torus $\cB$.
The case of a discrete torus corresponding to a finite volume is similar,
and treated in \cite{PeSa}.

\bigskip\noindent
For $z\in \bC$ and $\veps \ge 0$ set $\tnorm{z}{\veps} = \max \{ |z| , \veps \}$.

\begin{lemma}\label{le:decayonuv}
Let $E \in C^{d+2}(\cB,\bR)$.
Let $0< \epsilon < 1 $ and assume that 
$
\scalefunc (\Sp{p})
=
\scf (\frac{E(\Sp{p})}{\epsilon})
\scg (\Sp{p})
$
where $\scf \in C^\infty (\bR, \bR_0^+)$ and $\scg \in C^\infty (\cB, \bR_0^+)$. 
Let $\MI\in \bN_0^{d}$ be a multiindex and $\ami=|\MI|$.

\begin{enumerate}
\item
There is a constant $\Kd > 0$ such that for $\ami \le d+1$, 
then
\begin{equation}\label{Kditem}
\int_{-\beta}^\beta \dd \tau \;
\abs{
\Sp{x}^{\MI} \;
C^{(\scalefunc)}_{(\tau,\Sp{x}),(0,0)}
}
\le
\Kd
\sum_{m=0}^{\ami}
\epsilon^{m-\ami} \;
\int\limits_{{\rm  supp \;}\scalefunc}
\frac{\mumoma (\dd \Sp{p})}%
{\tnorm{E(\Sp{p})}{\frac{1}{\beta}}^{m+1}}
\end{equation}

\item
If there is $\kappa_0 > 0$ such that 
for all $E$ for which 
$\hat S_{E,\scg} = \{ \Sp{p}\in {\rm supp}\; \scg : E(\Sp{p}) =E\}$ is nonempty, 
$\inf_{\Sp{p} \in \hat S_{E,\scg}} \abs{\nabla E(\Sp{p}) } \ge \eta > 0$, 
and the submanifold $\hat S_{E,\scg}$ of $\cB$
has Gauss curvature bounded below 
pointwise by $\kappa_0$, 
then  there is a constant $\tKd > 0$ such that
for $\ami \le \lceil \frac{d+2}{2} \rceil$
\begin{equation}\label{tKditem}
\int_{-\beta}^\beta \dd \tau \;
\abs{
\Sp{x}^{\MI} \;
C^{(\scalefunc)}_{(\tau,\Sp{x}),(0,0)}
}
\le
\frac{\tKd}{|\Sp{x}|^{\frac{d-1}{2}}}
\;
\sum_{m=0}^{\ami}
\epsilon^{m-\ami} \;
\int \dd E \;
\frac{1_{\frac{E}{\epsilon} \in {\rm supp}\, f}}%
{\tnorm{E}{\frac{1}{\beta}}^{m+1}}
\end{equation}

\end{enumerate}

\end{lemma}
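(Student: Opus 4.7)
The strategy for both items is to convert the $\Sp{x}^{\MI}$--factor into differential operators acting on the integrand of (\ref{eq:Cscttpxxp}) via integration by parts in $\Sp{p}$, then organize the resulting terms by how many derivatives fall on each of the three factors $\scf(E/\epsilon)$, $\scg$, and $\CC(\tau,E)$, and finally carry out the $\tau$--integration using a one--dimensional estimate on derivatives of $\CC$ in $E$.  Items~1 and~2 differ only in how the remaining $\Sp{p}$--integral is bounded: in $L^\infty$ for the former, by the coarea formula plus stationary phase for the latter.  The key one--dimensional bound is
\begin{equation}\label{eq:keyCder}
\int_{-\beta}^{\beta}\dd\tau\;\abs{(\partial_E^{c}\CC)(\tau,E)}\;\le\;\frac{\const}{\tnorm{E}{\frac{1}{\beta}}^{c+1}},\qquad c\in\bN_0,
\end{equation}
which follows from (\ref{eq:CtaE}) by Leibniz: $\partial_E^c\CC(\tau,E)$ is a finite sum of terms of the form $\tau^k\E^{-\tau E}\partial_E^{c-k}(1-f_\beta(E))$; one uses $\abs{\partial_E^j f_\beta}\le C_j\beta^j$ uniformly in $E$ together with $\int_0^\beta\tau^k\E^{-\tau\abs{E}}\dd\tau\le C_k/\tnorm{E}{1/\beta}^{k+1}$ (treating $\abs{E}\beta\ge 1$ and $\abs{E}\beta<1$ separately), so that the $\beta$--factors produced by the $f_\beta$--derivatives combine with the $\tau^k$--integral to produce exactly $\tnorm{E}{1/\beta}^{-(c+1)}$.

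For item~1, write $\Sp{x}^{\MI}\E^{\I\Sp{p}\cdot\Sp{x}}=(-\I)^{\ami}\partial_{\Sp{p}}^{\MI}\E^{\I\Sp{p}\cdot\Sp{x}}$ and integrate by parts $\ami$ times in $\Sp{p}$; no boundary terms arise because $\scg$ has compact support.  Leibniz and the chain rule applied to $\partial_{\Sp{p}}^{\MI}[\scf(E/\epsilon)\,\scg\,\CC(\tau,E)]$ produce a finite sum of terms, each bounded pointwise by $\const\cdot\epsilon^{-a}\abs{(\partial_E^{c}\CC)(\tau,E(\Sp{p}))}$, where $a+b+c=\ami$ counts the derivatives that end up on $\scf$, $\scg$, and $\CC$ respectively; the remaining factors are products of derivatives of $E$ up to order $\ami\le d+1$ (covered by the $C^{d+2}$ hypothesis), of $\scf$, and of $\scg$, all uniformly bounded on $\mathrm{supp}\,\scalefunc$.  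Integrating in $\tau$ via (\ref{eq:keyCder}), using $\epsilon^{-a}\le\epsilon^{c-\ami}$ (valid since $\epsilon<1$ and $b\ge 0$), summing over $a+b+c=\ami$, and relabeling $c=m$ yields (\ref{Kditem}).

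For item~2, after the same integration by parts, I use the coarea formula to write the inner $\Sp{p}$--integral as $\int\dd E\,\abs{\nabla E}^{-1}\int_{\hat S_{E,\scg}}\E^{\I\Sp{p}\cdot\Sp{x}}\,(\ldots)\,\dd\sigma$.  The hypotheses $\abs{\nabla E}\ge\eta$ and the pointwise lower bound $\kappa\ge\kappa_0$ on the Gauss curvature of $\hat S_{E,\scg}$ make the standard stationary--phase decay for Fourier transforms of smooth surface measures on curved hypersurfaces applicable uniformly in $E$, producing an $\abs{\Sp{x}}^{-(d-1)/2}$ factor in front of the $E$--integral; the velocity bound $\eta$ also keeps the Jacobian $\abs{\nabla E}^{-1}$ from the coarea change of variables under control.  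Combining this decay with (\ref{eq:keyCder}) pointwise in $E$ and with the support restriction coming from $\scf(E/\epsilon)$ yields (\ref{tKditem}).  The restriction $\ami\le\lceil(d+2)/2\rceil$ is precisely what keeps the number of derivatives of $E$ consumed --- the $\ami$ from integration by parts plus the $\lceil(d-1)/2\rceil$ required for stationary phase --- within the available $C^{d+2}$ smoothness.

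The main obstacle I foresee is checking that the stationary--phase constant in item~2 is genuinely uniform in $E$ throughout $\mathrm{supp}\,\scg\cap\{E(\Sp{p})=E\}$ and in $\tau$, so that the $E$-- and $\tau$--integrations can be separated out cleanly after extracting the $\abs{\Sp{x}}^{-(d-1)/2}$ factor; this requires carefully tracking how the curvature and velocity bounds enter the stationary--phase constant, rather than merely invoking the usual pointwise statement.  Everything else reduces to routine Leibniz and chain rule combinatorics, which is absorbed into the constants $\Kd$ and $\tKd$.
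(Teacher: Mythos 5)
Your proposal follows essentially the same route as the paper: integration by parts in $\Sp{p}$ to convert $\Sp{x}^{\MI}$ into derivatives, a Leibniz bookkeeping of how many derivatives land on $\CC$, $\scf$ and $\scg$ (producing the $\epsilon^{m-\ami}$ factors), the key $\tau$--integrated bound $\int_{-\beta}^{\beta}\abs{\partial_E^m\CC}\,\dd\tau\le\const\,\tnorm{E}{1/\beta}^{-m-1}$, and for item~2 the coarea formula together with the standard $\abs{\Sp{x}}^{-(d-1)/2}$ decay of Fourier transforms of measures on hypersurfaces with curvature bounded below. Your sketch of the one--dimensional bound on $\partial_E^m\CC$ even supplies detail the paper leaves implicit; the only omission is that in the continuum case a derivative may also hit the spatial cutoff $\chi(a\Sp{p})$ in $\mumoma$, which is harmless and handled identically.
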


\begin{proof}
We have
\begin{equation}
\Sp{x}^\MI
C^{(\scalefunc)}_{(\tau,\Sp{x}),(0,0)}
=
\int
\mumoma (\dd\Sp{p}) \; 
\CC (\tau , E(\Sp{p})) \; \scalefunc (\Sp{p})\;
\left(- \I \sfrac{\partial}{\partial \Sp{p}}\right) ^{\MI} \;
\E^{\I \Sp{p} \cdot \Sp{x}} .
\end{equation}
Upon integration by parts, the derivative can act in four places --- 
on $\CC$, on either of  the factors $\scf$ and $\scg$ in $\scalefunc$, 
or (for the continuum system)
on the spatial ultraviolet cutoff function $\chi$ in 
$\mumoma (\dd \Sp{p}) = \chi (a\Sp{p}) \dd \Sp{p}$. 
Thus
\begin{eqnarray}
\Sp{x}^\MI
C^{(\scalefunc)}_{(\tau,\Sp{x}),(0,0)}
&=&
\sum_{m=0}^{\ami}
\sum_{n=0}^{\ami-m}
\epsilon^{-n} \;
\int
\mumoma (\dd\Sp{p}) \; 
\Gamma_m (\tau, E(\Sp{p})) \;
\scf^{(n)} \left(\sfrac{E(\Sp{p})}{\epsilon}\right)\;
G^{(\MI)}_{m,n} (\Sp{p})
\; \E^{\I \Sp{p} \cdot \Sp{x}}
\nonumber
\end{eqnarray}
where $G^{(\MI)}_{m,n} \in C^{d+2-\ami} (\cB, \bR)$
is independent of $\epsilon$ and  
satisfies supp $G^{(\MI)}_{m,n} \subset$ supp $\scg$,
and 
\begin{equation}
\Gamma_m (\tau, E)
=
\frac{\dd^m}{\dd E^m}
\CC (\tau, E) \; .
\end{equation} 
Taking the absolute value inside all sums and integrals
and using that 
\begin{equation}\label{Gamint}
\int_{-\beta}^\beta \dd \tau \; 
\abs{\Gamma_m (\tau, E) }
\le
\const {\tnorm{E}{\frac{1}{\beta}}}^{-m-1}, 
\end{equation}
we obtain (\ref{Kditem}).
To prove (\ref{tKditem}), we 
rewrite
\begin{eqnarray}
\Sp{x}^\MI
C^{(\scalefunc)}_{(\tau,\Sp{x}),(0,0)}
&=&
\sum_{m=0}^{\ami}
\sum_{n=0}^{\ami-m}
\epsilon^{-n} \;
\int \dd E \;
\Gamma_m (\tau, E) 
\scf^{(n)} \left(\sfrac{E}{\epsilon}\right)\;
S_{E,G^{(\MI)}_{m,n}} (\Sp{x})
\end{eqnarray}
where
\begin{equation}
S_{E,G^{(\MI)}_{m,n}} (\Sp{x})
=
\int \mumoma (\dd\Sp{p}) \; 
\delta (E - E(\Sp{p}))\;
G^{(\MI)}_{m,n} (\Sp{p})
\; \E^{\I \Sp{p} \cdot \Sp{x}}
\end{equation}
By standard theorems about the Fourier transform of 
surfaces \cite{Stein}, 
\begin{equation}
\abs{
S_{E,G^{(\MI)}_{m,n}} (\Sp{x})
}
\le
\const |\Sp{x}|^{-\frac{d-1}{2}}
\end{equation}
with a constant that depends on $\kappa_0$ 
and $E$, and which is finite under our regularity 
assumption on $E$. 
Finally, we use again (\ref{Gamint}).
\end{proof}

\medskip\noindent
The regularity assumptions on $E$ in Lemma \ref{le:decayonuv} are not optimized. 
For improved bounds using smoothing techniques,  
see \cite{PeSa}.
The scaling function $\scalefunc$ can be chosen $C^\infty$ in our applications, 
so that the assumptions of Lemma \ref{le:decayonuv} on $\scalefunc$ are not 
restrictive. 

This Lemma allows us to bound decay constants as follows.

\begin{koro}\label{koko}
Let $E \in C^{d+2} (\cB,\bR)$. 

\begin{enumerate}

\item
$\abf_{C} \le \const \beta^{d+1}$.

\item
If the system is an insulator, i.e.\ if 
there is $E_0 > 0$ such that for all $\Sp{p} \in \cB$, 
$\abs{E(\Sp{p})} \ge E_0$, then 
\begin{equation}
\abf_{C^{(\scalefunc)}}
\le \const
E_0^{-d-1}
\end{equation}
The constant is proportional to the volume of the support 
of $\scalefunc$. For $\scalefunc = 1$, it is proportional
to $\mumoma(\cB)$.  

\end{enumerate}
\noindent
If there is $E_1$ such that for all energies $E$ 
with $|E| \le E_1$ the level sets satisfy the 
hypotheses of Lemma \ref{le:decayonuv},
item 2, then we also have:

\begin{enumerate}
\item[3.]
\begin{equation}\label{bello}
\abf_{C} 
\le 
\const
\left(
E_1^{-d-1} + \beta^{\frac{d+3}{2} }
\right)
\end{equation}

\item[4.]
If $f(x) = 0$ unless $ 1 \le |x| \le 2$, then 
\begin{equation}\label{4}
\abf_{C^{(\scalefunc)}}
\le
\const
\epsilon^{-\frac{d+1}{2}}
\end{equation}

\item[5.]
For a sector of angular radius $\sqrt{\epsilon}$, 
i.e.\ $g(p) = \gamma (\frac{p}{\sqrt{\epsilon}})$, 
with $\gamma$ supported near $0$, 
$\abf_{C^{(\scalefunc)}}
\le
\const
\epsilon^{-1}
$ .

\end{enumerate}

\end{koro}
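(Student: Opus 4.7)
The plan is to reduce every item to Lemma~\ref{le:decayonuv} by a standard weight--function argument in the spatial variable. Starting from
\begin{equation}
\abf_{C^{(\scalefunc)}}
=
\int_{-\beta}^{\beta} d\tau \int_\Xspace dx\; |C^{(\scalefunc)}(\tau,x)|,
\end{equation}
I pick an integer $b > d$, use that $\int_\Xspace dx\,(1+|x|^b)^{-1} < \infty$, and bound
\begin{equation}
\abf_{C^{(\scalefunc)}}
\le
K_{d,b}\;
\sup_{x\in\Xspace} (1+|x|^b)\int_{-\beta}^\beta d\tau\; |C^{(\scalefunc)}(\tau,x)|.
\end{equation}
The pointwise estimate $|x|^b \le c_{b,d} \sum_{|\MI|=b} |x^\MI|$ turns the right--hand side into a quantity controlled by Lemma~\ref{le:decayonuv}, applied with $|\MI|=0$ and $|\MI|=b$. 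Each item is a specialization of this scheme with a particular choice of $\scalefunc$, $\epsilon$, and $b$.

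For item 1, take $\scalefunc\equiv 1$ (so the scale $\epsilon$ is absent) and $b=d+1$. The worst term in the sum $\sum_{m=0}^{d+1} \int \mumoma(dp)/\tnorm{E(p)}{1/\beta}^{m+1}$ occurs at $m=d+1$; the uniform bound $\tnorm{E(p)}{1/\beta}^{-1}\le\beta$, combined with regularity of the level sets, extracts $\beta^{d+1}$. For item 2, the insulator hypothesis gives $\tnorm{E(p)}{1/\beta}\ge E_0$ uniformly in $\beta$, so each resolvent factor is at most $E_0^{-m-1}$; optimizing over $b$ yields the $E_0^{-d-1}$ scaling with prefactor proportional to $\int \scalefunc\,d\mumoma$.

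For items 3--5 I invoke item 2 of the Lemma, whose extra factor $|x|^{-(d-1)/2}$ coming from Fourier analysis on the curved level sets permits a smaller exponent $b$ of order $(d+1)/2$ rather than $d+1$. In item 3 one splits $\cB = \{|E| \le E_1\} \cup \{|E| > E_1\}$: on the first region the curvature hypothesis activates the improved estimate and, after the same $\beta$--bound on the resolvent, yields the $\beta^{(d+3)/2}$ contribution; on the second region one is effectively in the insulator regime with gap $E_1$, contributing $E_1^{-d-1}$. For item 4, the support of $\scalefunc$ has radial thickness $\epsilon$, so each integral $\int dE\;1_{E/\epsilon \in {\rm supp}\,\scf}/\tnorm{E}{1/\beta}^{m+1}$ is of order $\epsilon^{-m}$ (of order $1$ for $m=0$), and summing $\sum_m \epsilon^{m-b}\epsilon^{-m} \sim (b+1)\epsilon^{-b}$ with $b\sim(d+1)/2$ reproduces $\epsilon^{-(d+1)/2}$. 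Item 5 is analogous, with the sector of angular radius $\sqrt\epsilon$ reducing the effective support volume and yielding $\epsilon^{-1}$ by the same type of count.

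The main obstacle is bookkeeping rather than analysis. One must select $b$ optimally in each regime, match the regions $|x|\lesssim 1$ (where item 1 of the Lemma must be used because item 2 breaks down at the origin) and $|x|\gtrsim 1$ (where item 2 provides the curvature decay) so that they combine without loss, and in the scaled cases verify that the interplay between the support volume $\mumoma({\rm supp}\,\scalefunc)$, the resolvent powers $\tnorm{E}{1/\beta}^{-m-1}$, and the prefactors $\epsilon^{m-b}$ really does sum to the stated rate. No new analytic ingredient beyond Lemma~\ref{le:decayonuv} and the weight--function trick is needed.
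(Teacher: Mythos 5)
Your proposal is correct and follows essentially the same route as the paper: the ``standard summation argument'' invoked there is exactly your weight--function reduction of $\abf_{C^{(\scalefunc)}}$ to the moment bounds of Lemma~\ref{le:decayonuv}, and your splitting of $\cB$ into $\{|E|\le E_1\}$ and $\{|E|>E_1\}$ for item 3 is the paper's partition of unity $\chil(E/E_1)+\chig(E/E_1)=1$. The remaining case analysis --- the uniform lower bounds on $\tnorm{E(\Sp{p})}{1/\beta}$, the curvature--improved $|\Sp{x}|^{-(d-1)/2}$ decay permitting fewer moments in items 3--5, and the $\epsilon$--counting on the shell and sector --- matches the paper's proof at a comparable (terse) level of detail.
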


\begin{proof}
The first bound follows by the standard summation argument from 
$\tnorm{E(\Sp{p})}{\frac{1}{\beta}} \ge \frac{1}{\beta}$.
The case of an insulator follows immediately from 
$\tnorm{E(\Sp{p})}{\frac{1}{\beta}} \ge E_0$. 
To prove (\ref{bello}), we insert a partition of unity 
$\chil (\frac{E(\Sp{p})}{\epsilon})
+\chig (\frac{E(\Sp{p})}{\epsilon}) =1$, 
where $\chil (x)$ vanishes for $|x| \ge 1$.
The support condition on $f$ in item 4 implies
$\tnorm{E(\Sp{p})}{\frac{1}{\beta}} \ge \epsilon$. 
Again, summation implies the result. The sector estimate
is similar. 
\end{proof}

\subsection{Convergence theorem}
In the following we state a theorem about convergence of expansions 
for the effective action which generalizes the main theorem of 
\cite{Salm99}. As in \cite{Salm99}, we define an interaction $V$ by its
interaction vertices $v_{\barm,m}: \bX^{\barm} \times \bX^m \to \bC$ as 
\begin{equation}\label{Vrep}
V(\Psi ) 
= 
\sum_{m,\barm \ge 0} \int \dd^{\barm} \ul{X} \dd^{m}
\ul{X}' v_{\barm,m} (\ul{X},\ul{X}') \psq^{\;\barm} (\ul{X}) \ps^{m}
(\ul{X}')
\end{equation}
where $\ul{X} = (X_1,\ldots X_m)$ and $\ps^m(\ul{X}) = \ps(X_1)
\ldots \ps(X_m)$. 
For $h>0$, let
\begin{equation}\label{normdef}
\norm{V}_h =  \sum_{m,\barm\ge 0 \atop m+\barm \ge 1} |v_{\barm, m}| h^{\barm+m}
\end{equation}
where 
\begin{equation}\label{stanley}
|v_{\barm, m}| = \max_{i\in \bN_{\barm+m}}\; \sup_{X_i} \int
\prod\limits_{j \ne i} \dd X_j \; |v_{\barm, m}(X_1,\ldots,X_{\barm+m})| .
\end{equation}

\begin{satz}\label{konv}
Let $C$ be an $(\bX \times \bX)$--matrix, considered as a covariance
for a fermionic Gaussian integral, with finite \DB\ $\dB_C$ and  decay bound
$\abf_C$. Denote $\omega_C = 2 \abf_C \dB_C^{-2}$. 
Let $h > 0$, $h'= h + \omega_C$, and let $V$ be an interaction with $\norm{V}_{h'} < \infty$.
Then the effective action  $W(V,C)$, defined as
\begin{equation}
W(V,C) 
=
\log \int \dd\mu_C(\Psi') \; \E^{V(\Psi'+\Psi)} ,
\end{equation}
exists and is analytic in $V$: 
let $W(V,C) = \sum_{p \ge 1} \frac{1}{p!} W_p(V,C) $ be the expansion of $W$ in
powers of $V$. Then for all $P \ge 1$,
\begin{equation}\label{orderP}
\norm{W(V) - \sum_{p=1}^{P} \frac{1}{p!} W_p(V,C)}_h
\le
{\omega_C}^P \frac{{\norm{V}_{h'}}^{P+1}}{1- \omega_C \norm{V}_{h'}}.
\end{equation}

\end{satz}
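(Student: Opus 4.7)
The plan is to reduce the proof to the standard tree--expansion framework of \cite{Salm99}, with the Gram bound there replaced everywhere by the more general determinant bound $\dB_C$ from Definition \ref{det.b}. Concretely, I would first apply the Brydges--Kennedy / Abdesselam--Rivasseau tree formula to the connected part $W_p(V,C)$ to write
\begin{equation}
\frac{1}{p!} W_p(V,C)
= \sum_{T} \int_{[0,1]^{p-1}}\!\!\!\!du\,
\int\prod_{e=(i,j)\in T}\!\!\!C_{x_e y_e}\;
\det\bigl(P_{ab}(u,T)\,C_{x_a y_b}\bigr)_{a,b}\;
\prod_{v=1}^{p} V(\cdot),
\end{equation}
where the sum is over labelled spanning trees on $\{1,\dots,p\}$ and $P(u,T)$ is the (interpolated) positive, diagonal--$\le 1$ matrix produced by the BKAR interpolation; the remaining ``loop'' fields give the determinant.

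Second, I would apply the determinant bound. Since $P(u,T)\in\cP_{n,1}$ for every $u$, the discussion around (\ref{eq:willi})--(\ref{eq:willie}) shows that the loop determinant falls exactly into the setting of Definition \ref{det.b}, and is therefore bounded uniformly in $u$ by $\dB_C^{2L}$, where $L$ counts the fields left after tree contractions. This is the one place where the new determinant bound replaces the previously assumed Gram bound; the rest of the bookkeeping is unchanged. The $p-1$ tree--line covariances, once integrated over the position variables, yield a factor $\alpha_C^{p-1}$ each, for a total of $\alpha_C^{p-1}$.

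Third, I would collect combinatorics. The number of labelled trees on $p$ vertices is $p^{p-2}$ (Cayley), and the usual assignment of tree half--edges to vertex fields, together with the choice of which fields of each $V(\Psi'+\Psi)$ become internal (contracted) versus external, produces exactly the shift from $h$ to $h' = h + \omega_C$ in the norm (\ref{normdef}): internal fields cost $\omega_C = 2\alpha_C\dB_C^{-2}$ each after absorbing the two $\dB_C^2$ factors that the neighbouring loop determinant and the tree covariance contribute per contracted pair. Combining with $p^{p-2}/p! \le e^p/p^2$ and the subadditivity of $\|\cdot\|_h$, one obtains
\begin{equation}
\frac{1}{p!}\,\|W_p(V,C)\|_h \;\le\; \omega_C^{\,p-1}\,\|V\|_{h'}^{p}.
\end{equation}
Summing the geometric series in $p \ge P+1$ yields the Taylor remainder estimate (\ref{orderP}), and convergence of the full series for $\omega_C\|V\|_{h'}<1$ then gives existence and analyticity of $W(V,C)$.

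The main obstacle is the combinatorial accounting in the third step: one must carefully match each internal half--edge of a vertex with the corresponding contraction (tree edge or loop slot) so that the product of $\alpha_C^{p-1}$, $\dB_C^{2L}$, $p^{p-2}/p!$ and the appropriate power of $h'$ exactly reorganises into $(\omega_C\|V\|_{h'})^p$. All of this bookkeeping is already carried out in \cite{Salm99}; the only genuinely new input is the observation that Theorem \ref{haupt} and Definition \ref{det.b} make that argument go through without the covariance needing a Gram representation, which by Corollary \ref{non.sep.2} is not available for the Matsubara covariance (\ref{eq:Cttpxxp}) on any separable Hilbert space.
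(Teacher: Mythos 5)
Your proposal is correct and follows essentially the same route as the paper: the paper's proof consists precisely of the remark that the argument of \cite{Salm99} goes through verbatim once the Gram constant in the determinant estimate (Lemma 6 there) is replaced by the determinant bound $\dB_C$, which is exactly the reduction you carry out. Your expanded account of the tree expansion, the role of the interpolation matrix $P(u,T)\in\cP_{n,1}$, and the bookkeeping behind the shift $h\mapsto h'=h+\omega_C$ is consistent with that reference and adds no new ideas beyond what the paper intends.
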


\begin{proof}
Same as in \cite{Salm99}, except that in the bound for the determinants, 
Lemma 6 of \cite{Salm99}, the Gram constant  is replaced by the \DB\ $\dB_C$. 
\end{proof}

\noindent
The coefficients in the expansion of $W(V,C)(\Psi)$ in the fields $\Psi$ 
are the amputated connected Green functions, so the above theorem 
implies their analyticity in the interaction. In particular, analyticity holds 
for all cases listed in Corollary \ref{koko}, with the appropriate constants. 
In case of an insulator, the convergence radius is uniform in the temperature. 
In case of scaled propagators, one obtains power counting bounds that
are on all scales operationally equivalent to those with a frequency cutoff. 
That no $\omega$-cutoff is needed implies that the analytic structure
as a function of $\omega$ need not be mutilated in a multiscale construction.

\section{Bounds for the integration over large frequencies}
\label{sec:largeomega}

In a multiscale analysis of many--fermion systems, the integration 
over fields with large Matsubara frequency is often the first integration
step in the analysis. In the following we give bounds for the 
effective action obtained by this integration step. 
We first decompose the covariance $C^{(\scalefunc)}$ given in (\ref{eq:Cscttpxxp}) 
in an ultraviolet and an infrared part.

Let $\chil$ and $\chig \in C^\infty (\bR,[0,1])$ with $\chil  + \chig  = 1$, 
$\chil (0) = 1$,   
with constants $\Kchil > 0$ and $\alpha > 0$ such that  
$\chil(x) \le  \Kchil |x|^{-\alpha}$ for all $|x| \ge 1$. 
Abbreviate the covariance $C_{(\tau,\Sp{x}),(\tau',\Sp{x}')}^{(\scalefunc)}
= C^{(\scalefunc)} (\tau'-\tau, \Sp{x}'-\Sp{x})$.
The covariance
\begin{equation}
\Comhl (\tau, \Sp{x}) 
=
\frac{1}{\beta}
\sum_{\omega}
\int \mumoma(\dd \Sp{p}) \; 
\E^{-\I \omega \tau + \I \Sp{p} \cdot \Sp{x}} \; 
\chil \left(\sfrac{\omega}{\Omega}\right) \;
\frac{\scalefunc (\Sp{p})}{\I \omega - E(\Sp{p})}
\end{equation}
is the infrared part of $C^{(\scalefunc)}$,
and 
\begin{equation}\label{IRUVdecomp}
\Comhg (\tau, \Sp{x}) 
=
C^{(\scalefunc)} (\tau, \Sp{x}) - \Comhl(\tau, \Sp{x})  .
\end{equation}
is the ultraviolet part of $C^{(\scalefunc)}$. 
An obvious variant of this decomposition is one where the argument of the function 
$\chil$ is $\Omega^{-2} (\omega^2 + E(\Sp{p})^2)$. Our bounds adapt to 
this choice in an obvious way, so we will not discuss it further here. 

By standard properties of Grassmann Gaussian integration, 
the convolution with the Gaussian measure $C^{(\scalefunc)}= \Comhg+\Comhl$
becomes an iterated convolution, first with $\Comhg$, then 
with $\Comhl$ (see, e.g.\ \cite{msbook}).

\subsection{Determinant bound}

\begin{lemma}
Let $\chil$ be chosen as above, $\Omega \ge 1$, and $\beta > \pi$.
Let $E$ be continuous. Then the \DB\ of $\Comhg$ 
satisfies
\begin{equation}
\dB_{\Comhg}^2
\le
\norm{\scalefunc}_1 \;
(\horror +  2 \ln \Omega) 
+
\int\limits_{|E(\Sp{p})| \le 1} \mumoma (\dd \Sp{p}) \;
\abs{\scalefunc (\Sp{p})} \;
\ln \frac{1}{\tnorm{E(\Sp{p})}{\frac{\pi}{\beta}}} \; 
\end{equation}
where 
$\horror = 10 + 2\Kchil (\alpha^{-1} + (\beta\Omega)^{-1})$.
\end{lemma}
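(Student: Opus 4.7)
The plan is to decompose $\Comhg = C^{(\scalefunc)} - \Comhl$, bound the determinant bound of each summand separately, and combine via the additivity property built into Theorem~\ref{haupt} (equivalently Lemma~\ref{subdetsum}). All of the $\Omega$-dependence sits in the subtracted piece: Theorem~\ref{nice} gives $\dB_{C^{(\scalefunc)}}\le 2\norm{\scalefunc}_1^{1/2}$ uniformly in $\Omega$. For $\Comhl$ itself, the cutoff $\chil(\omega/\Omega)$ together with the polynomial tail $\chil(x)\le\Kchil\abs{x}^{-\alpha}$ for $\abs{x}\ge 1$ makes the Fourier transform $\hat C^{(\scalefunc,<)}(\omega,\Sp{p}) = \chil(\omega/\Omega)\scalefunc(\Sp{p})/(\I\omega-E(\Sp{p}))$ integrable on $\hat\bX_d$, so the standard Fourier-based Gram representation (\ref{eq:stanGram}) applies to $\Comhl$ and yields $\dB_{\Comhl}^2\le\norm{\hat C^{(\scalefunc,<)}}_1$.

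The core estimate is therefore to bound
$$\norm{\hat C^{(\scalefunc,<)}}_1 = \frac{1}{\beta}\sum_\omega \int \mumoma(\dd\Sp{p})\,\chil(\omega/\Omega)\,\abs{\scalefunc(\Sp{p})}\,\bigl(\omega^2+E(\Sp{p})^2\bigr)^{-1/2}.$$
I would carry this out by performing the Matsubara sum pointwise in $\Sp{p}$ and splitting at $\abs{\omega}=\Omega$. On $\abs{\omega}\le\Omega$, bounding $\chil\le 1$ and comparing the sum with $\int_{\pi/\beta}^{\Omega}(\omega^2+E^2)^{-1/2}\dd\omega$ produces a logarithmic contribution of order $\ln(\Omega/\tnorm{E(\Sp{p})}{\pi/\beta})$; the lower endpoint $\pi/\beta$ reflects the smallest fermionic Matsubara frequency, and the hypothesis $\beta>\pi$ controls the midpoint-rule discretization error by a term of size $\Kchil/(\beta\Omega)$. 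On $\abs{\omega}>\Omega$, the polynomial decay $\chil(\omega/\Omega)\le\Kchil(\Omega/\abs{\omega})^\alpha$ gives a convergent tail of order $\Kchil/\alpha$. Integrating against $\mumoma(\dd\Sp{p})\abs{\scalefunc(\Sp{p})}$ and dropping the region $\abs{E}\ge 1$ from the logarithmic integral (where $\ln(1/\tnorm{E}{\pi/\beta})\le 0$) then yields an estimate of the shape
$$\norm{\hat C^{(\scalefunc,<)}}_1 \le \norm{\scalefunc}_1\bigl(c_1\ln\Omega + c_2 + \Kchil(\alpha^{-1}+(\beta\Omega)^{-1})\bigr) + c_3 \int_{\abs{E(\Sp{p})}\le 1}\mumoma(\dd\Sp{p})\,\abs{\scalefunc(\Sp{p})}\,\ln\frac{1}{\tnorm{E(\Sp{p})}{\pi/\beta}},$$
with explicit constants $c_i$ that come out of the order $1/\pi$.

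Finally, Lemma~\ref{subdetsum} applied to the decomposition $\Comhg = C^{(\scalefunc)} + (-\Comhl)$ gives $\dB_{\Comhg}\le\dB_{C^{(\scalefunc)}}+\dB_{\Comhl}$; squaring and using $(a+b)^2\le 2(a^2+b^2)$ produces $\dB_{\Comhg}^2 \le 8\norm{\scalefunc}_1 + 2\norm{\hat C^{(\scalefunc,<)}}_1$, and inserting the previous estimate yields the claim after assembling all non-logarithmic terms into $\horror$. The main obstacle is constant bookkeeping: the factor $2$ from the squaring inequality must be absorbed into the stated $2\ln\Omega$ coefficient and the unit coefficient of the $\ln$-integral (the required slack being provided by the factor $1/\pi$ that naturally appears when converting a Matsubara sum to an integral), while the additive terms $8\norm{\scalefunc}_1$, $2c_2\norm{\scalefunc}_1$, and $2\Kchil/\alpha + 2\Kchil/(\beta\Omega)$ must be packed into $\horror = 10 + 2\Kchil(\alpha^{-1}+(\beta\Omega)^{-1})$.
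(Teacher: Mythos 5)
Your proposal follows essentially the same route as the paper's proof: decompose $\Comhg = C^{(\scalefunc)} - \Comhl$, bound $\dB_{C^{(\scalefunc)}}$ via Theorem \ref{nice} (Corollary \ref{opt2}), give $\Comhl$ the standard Fourier--based Gram representation with $\gamma_<^2 = \Vert \hat{C}^{(\scalefunc,<)}\Vert_1$, estimate the resulting Matsubara sum by splitting into frequency regions, and combine by subadditivity of determinant bounds. If anything, your handling of the final combination step (via $(a+b)^2 \le 2(a^2+b^2)$ with the slack coming from the $1/\pi$ density factor of the Matsubara sum) is more explicit than the paper's, which writes the result as if $\dB_{C^{(\scalefunc)}}^2$ and $\gamma_<^2$ simply add without accounting for the cross term.
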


\begin{proof}
By our hypothesis on the decay of $\chil$, the Fourier transform of the 
covariance $\Comhl$ is $\ell^1$ in the Matsubara frequency. Thus $\Comhl$ has 
a Gram representation of type (\ref{eq:stanGram}), with finite Gram constant
$\gamma_<$. By (\ref{IRUVdecomp}) and Theorem \ref{haupt}, a \DB\ for $\Comhg$ is given by 
$\dB_{C^{(\scalefunc)}} + \gamma_<$. $\dB_{C^{(\scalefunc)}}$ 
was bounded in  Corollary \ref{opt2}, 
so it suffices to estimate $\gamma_<$. By definition, 
\begin{equation}
{\gamma_<}^2
=
\frac{1}{\beta}
\sum_{\omega\in\Matf}
\int \mumoma (\dd \Sp{p}) 
\chil (\sfrac{\omega}{\Omega}) \;
\frac{\abs{\scalefunc (\Sp{p})}}{\abs{\I \omega - E(\Sp{p})}} \; .
\end{equation}
The contribution from $|\omega| \ge 1$ is bounded by 
$$
\norm{\scalefunc}_1 \; 
\frac{1}{\beta}\sum_{\omega\in\Matf}
\chil (\sfrac{\omega}{\Omega}) \;
\frac{1}{\abs{\omega }} 
\le
2 \norm{\scalefunc}_1 \;
\left(
\sfrac{1}{\pi} + \ln \Omega 
+
\frac{\kappa}{\alpha} 
+
\frac{\kappa}{\beta\Omega}
\right).
$$
For the contribution from $|\omega| < 1$, we will repeatedly use the elementary bound
$
\frac{1}{\beta}
\sum_{\omega\in\Matf}
1_{|\omega| < u} \le \frac{2 u}{\pi}
$.
For $\abs{E(\Sp{p})} \ge 1$, $|\I \omega - E(\Sp{p}) |^{-1} \le 1$, so 
the contribution from $|\omega| < 1$ and $\abs{E(\Sp{p})} \ge 1$
is bounded by $2 \norm{\scalefunc}_1  /\pi$. 
For $\abs{E(\Sp{p})} \le 1$, we use that 
\begin{equation}
\frac{1}{\beta}
\sum_{\omega\in\Matf}
1_{|\omega| < \abs{E(\Sp{p})}} 
\frac{1}{\abs{\I \omega - E(\Sp{p})}}
\le
\frac{1}{\beta \abs{E(\Sp{p})}}
\sum_{\omega\in\Matf}
1_{|\omega| < \abs{E(\Sp{p})}} 
\le 
\frac{2}{\pi}
\end{equation}
and, bounding the sum by an integral,
\begin{equation}
\frac{1}{\beta}
\sum_{\omega\in\Matf \atop \abs{E(\Sp{p})} \le |\omega| \le 1}
\frac{1}{\abs{\I \omega - E(\Sp{p})}}
\le 
\frac{1}{\beta}
\sum_{\omega\in\Matf  \atop \abs{E(\Sp{p})} \le |\omega| \le 1}
\frac{1}{|\omega|}
\le
\frac{2}{\pi}
+
\ln \frac{1}{\tnorm{E(\Sp{p})}{\frac{\pi}{\beta}}} \; .
\end{equation}
\end{proof}

\subsection{Decay constant}
In this section we show that for a strict cutoff function $\chil$, 
and under natural assumptions on the function $E$, 
the decay constant of $\Comhg$ is bounded by a multiple of $\Omega^{-1}$.
Thus the extra factor $\log \Omega$ from the \DB\ can be avoided  in this bound.

\begin{lemma}
Assume that $\chil$ satisfies $\chil (x) = 1 $ for $|x| \le 1$ and $\chil (x) =0$
for $|x| \ge 2$. Let $\Omega \ge 1$. Assume that the dispersion function $E$ is 
the Fourier transform $E = \hat F$  of some $F \in L^1 (\Gamma, \bC)$, 
and that the inverse Fourier transform $\palefunc$ of $\scalefunc$ 
satifies $\palefunc \in L^1(\Gamma, \bC)$.
There is a constant $\tK > 0 $, depending only on $\chil$, 
such that  if  $ \frac{2 \tK}{\Omega} \norm{F}_1< 1$ and $\Omega^{-1}\, \norm{E}_\infty  < 1$, 
the decay constant of $\Comhg$ satisfies
\begin{equation}\label{dComhg}
\abf_{\Comhg}
\le
\frac{\tK}{\Omega}\;\;
\frac{\norm{\palefunc}_1}{1- 2 \tK \Omega^{-1}\norm{F}_1} .
\end{equation}
In particular, if $\tK \norm{F}_1 < \frac14 \Omega$, then 
$\abf_{\Comhg} \le \frac{2\tK}{\Omega} \norm{\palefunc}_1$.
\end{lemma}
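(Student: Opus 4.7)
\medskip\noindent
The plan is to exploit that $|\omega|\ge \Omega > \norm{E}_\infty$ on the support of $\chig(\omega/\Omega)$, so that the resolvent admits a convergent Neumann expansion
\[
\frac{1}{\I\omega - E(\Sp{p})} \;=\; \sum_{n=0}^\infty \frac{E(\Sp{p})^n}{(\I\omega)^{n+1}}.
\]
Inserting this into the definition of $\Comhg$ and exchanging the (absolutely convergent) sum with the Matsubara sum and the $\mumoma$-integral gives the factorised representation $\Comhg(\tau,\Sp{x})=\sum_{n\ge 0} K_n(\tau)\, H_n(\Sp{x})$, with $K_n(\tau)=\frac{1}{\beta}\sum_{\omega\in\Matf} \E^{-\I\omega\tau}\chig(\omega/\Omega)/(\I\omega)^{n+1}$ and $H_n$ the inverse Fourier transform on $\Gamma$ of $E^n\scalefunc$. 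By the convolution theorem $H_n=F^{*n}*\palefunc$, so Young's inequality yields $\sum_{\Sp{x}\in\Gamma}|H_n(\Sp{x})|\le\norm{F}_1^n\norm{\palefunc}_1$, whence
\[
\abf_{\Comhg} \;\le\; \norm{\palefunc}_1 \sum_{n=0}^\infty \norm{F}_1^n \int_{-\beta}^\beta |K_n(\tau)|\,\dd\tau,
\]
and everything reduces to a $\beta$-uniform bound on $\int|K_n|$.

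The main obstacle is precisely this $\beta$-uniformity: any termwise estimate of the Matsubara sum followed by $\int|\E^{-\I\omega\tau}|\,\dd\tau = 2\beta$ would produce a fatal factor of $\beta$. I would remove it by antiperiodic Poisson summation in $\omega$. Writing
\[
J_n(\tau)=\frac{1}{2\pi}\int_\bR \E^{-\I\omega\tau}\,\chig(\omega/\Omega)\,(\I\omega)^{-n-1}\,\dd\omega
\]
for the continuous analogue, Poisson's formula (with the alternating sign arising from the half-integer shift of the Matsubara grid) gives $K_n(\tau)=\sum_{j\in\bZ}(-1)^j J_n(\tau+j\beta)$. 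Since the translates $[j\beta,(j+1)\beta]$ tile $\bR$, and $\Comhg(\tau+\beta,\Sp{x})=-\Comhg(\tau,\Sp{x})$ gives antiperiodicity,
\[
\int_{-\beta}^\beta|K_n(\tau)|\,\dd\tau\;\le\; 2\int_\bR|J_n(\tau)|\,\dd\tau.
\]
The substitution $\omega=\Omega s$ isolates all $\Omega$-dependence: $J_n(\tau)=\Omega^{-n}\tilde\chi_n(\Omega\tau)$ with $\tilde\chi_n(s)=\frac{1}{2\pi}\int\E^{-\I\omega s}\chig(\omega)(\I\omega)^{-n-1}\,\dd\omega$, whence $\int_\bR|J_n|=\Omega^{-n-1}\norm{\tilde\chi_n}_1$, and the remaining integral depends only on $\chil$ and $n$.

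To close, I would bound $\norm{\tilde\chi_n}_1$ via weighted Cauchy--Schwarz, $\norm{\tilde\chi_n}_1\le \sqrt{\pi}\,(\norm{\tilde\chi_n}_2^2+\norm{s\,\tilde\chi_n}_2^2)^{1/2}$, and invoke Plancherel to reduce the right-hand side to $L^2$-norms of $\phi_n(\omega)=\chig(\omega)(\I\omega)^{-n-1}$ and $\phi_n'(\omega)$. Because $\chig$ is smooth with support in $\{|\omega|\ge 1\}$, direct estimates give $\norm{\phi_n}_2,\norm{\phi_n'}_2\le C(\chil)\sqrt{n+1}$, hence $\norm{\tilde\chi_n}_1\le C(\chil)\sqrt{n+1}\le 2^n\tK^{n+1}$ after choosing $\tK$ large enough in terms of $\chil$. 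Summing the geometric series,
\[
\abf_{\Comhg}\;\le\; \norm{\palefunc}_1\sum_{n=0}^\infty \frac{2^n\tK^{n+1}\norm{F}_1^n}{\Omega^{n+1}}\;=\;\frac{\tK\,\norm{\palefunc}_1}{\Omega}\cdot\frac{1}{1-2\tK\Omega^{-1}\norm{F}_1},
\]
which converges under the assumed smallness condition and yields (\ref{dComhg}); the last claim follows by specialising to $2\tK\Omega^{-1}\norm{F}_1\le\frac12$.
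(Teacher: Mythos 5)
Your proposal shares the paper's overall skeleton --- a Neumann expansion of the resolvent in powers of $E(\Sp{p})/\I\omega$, Young's inequality in the spatial variable giving the factor $\norm{F}_1^n\norm{\palefunc}_1$, and a geometric series in $2\tK\Omega^{-1}\norm{F}_1$ --- but the core temporal estimate is done by a genuinely different method. The paper never integrates $\abs{K_n(\tau)}$ directly: it introduces the mollifier $u(\tau)=\frac{1}{\beta}\sum_\omega\E^{-\I\omega\tau}\chil(\omega/\Omega)$, proves $\abs{u(\tau)}\le\const\Omega(1+\Omega\abs{\tau})^{-3}$ by summation by parts, computes the $L^1$ norm of the $n=0$ kernel explicitly as the step function $a(\tau)=\CC(\tau,0)$ minus its smoothing $u*a$ (which is supported, in effect, on an interval of width $\Omega^{-1}$ around the jump), and then writes the $n$-th term as an $n$-fold temporal convolution of copies of this $n=0$ object at scale $\Omega/2$, so that only the single estimate $\norm{a-u*a}_1\le\tK\Omega^{-1}$ is ever needed. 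You instead bound each $\int_{-\beta}^{\beta}\abs{K_n}$ separately via antiperiodic Poisson summation, the scaling $J_n(\tau)=\Omega^{-n}\tilde\chi_n(\Omega\tau)$, and the weighted Cauchy--Schwarz/Plancherel bound $\norm{\tilde\chi_n}_1\le\const(\norm{\phi_n}_2^2+\norm{\phi_n'}_2^2)^{1/2}\le\const\sqrt{n+1}$. Your route is arguably cleaner in that it needs no convolution bookkeeping and gives a slightly better $n$-dependence ($\sqrt{n+1}$ rather than $2^n$); the paper's route buys explicit real-space information (the kernel is concentrated near the discontinuity of $\CC$) and avoids ever manipulating a non-absolutely-convergent frequency sum.

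That last point is the one soft spot you should repair. The double sum over $(n,\omega)$ is \emph{not} absolutely convergent: the $n=0$ term $\frac{1}{\beta}\sum_{\omega\in\Matf}\E^{-\I\omega\tau}\chig(\omega/\Omega)(\I\omega)^{-1}$ has only $O(\abs{\omega}^{-1})$ decay, which is exactly the Matsubara UV problem the lemma is about, so your parenthetical justification of the interchange fails for that term. You need (i) to identify the time-domain definition $\Comhg=C^{(\scalefunc)}-\Comhl$ with the conditionally convergent (symmetric partial sum) frequency series before expanding, and (ii) to justify Poisson summation for $\phi_0=\chig(\omega)(\I\omega)^{-1}\notin L^1(\bR)$. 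Both are fixable with the estimates you already have: for $n\ge 1$ everything is absolutely convergent, and for $n=0$ your Plancherel argument shows $\tilde\chi_0\in L^1(\bR)$ (since $\chig$ vanishes on $\abs{\omega}\le 1$, both $\phi_0$ and $\phi_0'$ are in $L^2$), so $\sum_j(-1)^jJ_0(\cdot+j\beta)$ converges in $L^1([-\beta,\beta])$ and agrees with $K_0$ by matching Fourier coefficients in $L^2$; but as written the step is asserted rather than proved, and it is precisely where the difficulty of the lemma is concentrated.
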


\begin{proof}
Let
\begin{equation}
u(\tau)  
=
\frac{1}{\beta}
\sum_{\omega \in \frac{\pi}{\beta} \bZ }
\E^{-\I \omega \tau} \; \chil \left(\frac{\omega}{\Omega}\right) 
\end{equation}
then 
\begin{equation}
\Comhg  = C^{(\scalefunc)} -  u * C^{(\scalefunc)}
\end{equation}
where the convolution is in $\tau$. 
By summation by parts, 
\begin{equation}
\left(
\E^{\I \frac{\pi}{\beta}\tau} -1
\right)^n
u (\tau) 
=
\frac{1}{\beta}
\sum_\omega
\E^{-\I \omega \tau}
(\delta^n \chil )
\left(\frac{\omega}{\Omega}\right)
\end{equation}
where $\delta$ is the difference operator 
$(\delta f) (\omega) = f(\omega+\frac{\pi}{\beta}) - f(\omega)$. 
Using that for all $\tau $ with $|\tau |  \le \beta$, 
$\abs{\E^{\I \frac{\pi}{\beta}\tau} -1} = 2 \sin \frac{\pi |\tau|}{2\beta} \ge
2 \frac{|\tau|}{\beta}$  and that $\chil$ is smooth,  it follows that 
\begin{equation}\label{96}
\abs{u(\tau)} 
\le 
\frac14 \tK \frac{\Omega}{(1+ \Omega |\tau|)^3}
\end{equation}
where $\tK$ depends on the sup norms of the first three derivatives of $\chil$.

Let $a(\tau) = \CC(\tau,0)$. 
By definition, $a(s) = \theta^+ (-s) - \frac12 $ where 
$\theta^+ (t) =1$ for $t \ge 1$ and zero otherwise. 
Because $\int_{-\beta}^\beta  u(s) \dd s = \chil (0) =1$, 
\begin{equation}
a(\tau) - (u*a)(\tau)
=
\int\limits_{-\beta}^\beta \dd s\; 
u(s) \;
[ a(\tau) - a(\tau -s)] .
\end{equation}
The $\frac12$ drops out, and  
$a(\tau) - (u*a)(\tau) = \mbox{ sgn}(\tau) \int_{\cI (\tau)} u(s) \dd s$, 
where
\begin{equation}
\cI (\tau)
=
\cases{
[-\beta,-\beta+\tau] \cup [\tau,\beta]  & for $\tau > 0$ \cr
& \cr
[-\beta,\tau] \cup [\beta+\tau, \beta] & for $\tau \le 0$.
}
\end{equation}
Our hypothesis on $\palefunc$ and (\ref{96}) imply that
\begin{equation}
\Aomhg (\tau, \Sp{x})
=
\palefunc (\Sp{x}) \; 
[ a(\tau) - (u*a)(\tau)]
\end{equation}
satisfies
\begin{equation}\label{onenorm}
\norm{\Aomhg (\tau, \Sp{x})}_1
\le
\tK
\norm{\palefunc}_1
\Omega^{-1} .
\end{equation}
The same bound holds with $(\palefunc,\scalefunc)$ replaced by $(F,E)$.
For all $(\tau,\Sp{x})$, 
\begin{equation}
\Comhg (\tau,\Sp{x})
=
\Aomhg (\tau,\Sp{x})
+
\frac{1}{\beta}
\sum_\omega
\frac{\E^{-\I \omega\tau}}{(\I \omega)^2}
\chig\left(\frac{\omega}{\Omega}\right)\;
\int_\cB \mumoma (\dd \Sp{p}) \;
\frac{E(\Sp{p}) \scalefunc (\Sp{p})}{ 1- \frac{E(\Sp{p})}{\I \omega} }\;
\E^{\I \Sp{p} \Sp{x}} .
\end{equation}
Because $\chig(\frac{\omega}{\Omega}) = 0$ for 
 $|\omega| \le \Omega$, the condition $\Omega^{-1} \norm{E}_\infty < 1$
implies that the geometric series for 
$(1-E(\Sp{p})/\I \omega )^{-1}$ converges uniformly in $\Sp{p}$. 
By dominated convergence, the summation can be exchanged with the
integral over $\Sp{p}$ and the summation over $\omega$. 
Moreover, by the support properties of $\chil$, we may 
insert a factor $\chig (\frac{2\omega}{\Omega})^n$ 
in the $n^{\rm th}$ order term in this expansion, to get
\begin{equation}
\Comhg (\tau,\Sp{x})
=
\Aomhg (\tau,\Sp{x})
+
\left[
\sum_{n=1}^\infty
\Aomhg * 
A_{\Omega/2}^{(E,>)}
* \ldots * 
A_{\Omega/2}^{(E,>)}
\right] (\tau, \Sp{x})
\end{equation}
where the convolution is in $\tau $ and $\Sp{x}$ and 
$n$ factors $A_{\Omega/2}^{(E,>)}$ appear in the  product. 
The standard $L^1$ bound for the convolution implies
\begin{equation}
\norm{\Comhg}_1
\le
\norm{\Aomhg}_1
\left(
1
+
\sum_{n=1}^\infty
\norm{A_{\Omega/2}^{(E,>)}}_1^n
\right)
\end{equation}
which converges  by the hypotheses on $\palefunc$, $F$, and 
by (\ref{onenorm}), and yields the bound (\ref{dComhg}).
\end{proof}

\bigskip\noindent
Theorem \ref{konv} directly applies and implies convergence of 
the effective action obtained from the integration over large frequencies. 
Note that because of the way the constants depend on $\Omega$, 
the initial interaction can be taken arbitrarily strong (as long as it is 
summable):
if $U$ denotes the coupling constant of a quartic interaction, 
convergence of the expansion for the effective action holds for all $U$ with 
$\frac{U}{\Omega} (\ln \Omega)^2 $ small enough, which can always
be achieved by taking $\Omega$ large enough. Thus, for arbitrarily 
strong coupling, the initial integration step is given by a convergent 
expansion. The consequences and some possible extensions of this 
are discussed in \cite{dynRG}.

\bigskip\noindent
{\bf Acknowledgement. } This work was supported by 
DFG grant Sa~1362/1, by the Max--Planck society, and, 
in its final stage, by 
{\em DFG-Forschergruppe} FOR718.

\end{document}